\newtheorem*{remark}{Remark}
\title{Sparse Kalman Filtering Approaches to Covariance Estimation from High Frequency Data in the Presence of Jumps}
\author{Michael Ho\footnotemark[2], Jack Xin\footnotemark[4]}
\begin{document}
\maketitle
\newenvironment{Prog}{\refstepcounter{Prog}\align}{\tag{P\theProg}\endalign}
\newenvironment{Prog*}{\align}{\endalign}
\renewcommand{\thefootnote}{\fnsymbol{footnote}}
\footnotetext[2]{Department of Mathematics, UC Irvine, Irvine, CA 92697, USA. Email: mtho1@uci.edu.}
%\footnotetext[3]{The Paul Merage School of Business, UC Irvine, Irvine, CA 92697. Email: zsun@merage.uci.edu.}
\footnotetext[4]{Department of Mathematics, UC Irvine, Irvine, CA 92697, USA. Email: jxin@math.uci.edu.}
%\footnotetext[5]{The work was partially supported by NSF grant DMS-1211179.}

\renewcommand{\thefootnote}{\arabic{footnote}}
\newcommand{\overbar}[1]{\mkern 1.5mu\overline{\mkern-1.5mu#1\mkern-1.5mu}\mkern 1.5mu}
\newcommand{\Ell}[1]
{  \ell_{#1}  }
\newcommand{\supp}[1]
{  \textnormal{supp}({#1})  }
\newcommand{\suppEpp}[2]
{  \textnormal{supp}_{#2}({#1})  }
\newcommand{\sign}[1]
{  \textnormal{sgn}({#1})  }
\newcommand{\Elln}[2]
{  ||#1||_{\ell_{#2}}  }
\newcommand{\EllnW}[3]
{  ||#1||_{#2,\ell_{#3}}  }

\newcommand{\EllnBig}[2]
{  \Big|\Big|#1 \Big|\Big|_{\ ell_{#2}}  }
\newcommand{\E}
{ \mathbb{E} }
\newcommand{\seq}[2]
{\{#1_{n}\}_{n=1}^{#2} }
\newcommand{\Ft}
{\mathcal{F}_{t}}
\newcommand{\Yt}
{Y_{t}}
\newcommand{\hYt}
{\hat{Y}_{t}}
\newcommand{\SigEp}
{\Sigma^{\epsilon}}
\newcommand{\wTil}
{\tilde{w}}
\newcommand{\Rtil}
{\tilde{R}}
\newcommand{\vect}[1]
{\vec{#1}}
\newcommand{\partialDer}[1]
{ \frac{\partial}{\partial #1} }
\newcommand{\cov}
{ cov }
\newcommand{\lamSet}
{ \{\lambda_{i}(t)\}_{1\le i\le N, 2 \le t \le T  }}
\newcommand{\lamSetInv}
{ \{\lambda_{i}(t)^{-1}\}_{1\le i\le N, 2 \le t \le T  }}
\newcommand{\sigSet}
{ \{\sigma_{o,i}^{2}\}_{1\le i\le N  }}
\newcommand{\trace}
{ \textnormal{tr} }
\newcommand{\ci}{\mathrel{\text{\scalebox{1.07}{$\perp\mkern-10mu\perp$}}}}
%%%%%%%%%%%%%%%%%%%%%%%%%
%%%%%%%%%%%%%%%%%%%%%%%%%
%%%%%%%%%%%%%%%%%%%%%%%%%%
%%%%%%%%%%%%%%%%%%%%%%%%%%

\newtheorem{theo}{Theorem}
\newtheorem{theorem}[theo]{Theorem}
\newtheorem{lemma}[theo]{Lemma}
\newtheorem{prop}[theo]{Proposition}
\newtheorem{corr}[theo]{Corollary}

\newtheorem{Def}{Definition}
\newtheorem{Remark}{Remark}
\newtheorem{definition}{Definition}
\newtheorem{Conj}{Conjecture}
\newtheorem{example}{Example}
\setlength{\arrayrulewidth}{1pt}
\begin{abstract}

Estimation of the covariance matrix of asset returns from high frequency data is complicated by asynchronous returns, market microstructure noise and jumps.
One technique for addressing both asynchronous returns and market microstructure is the Kalman-EM (KEM) algorithm.   However the KEM approach assumes log-normal prices and
does not address jumps in the return process which can corrupt estimation of the covariance matrix.

In this paper we extend the KEM algorithm to price models that include jumps.   We propose two sparse Kalman filtering approaches to this problem.    In the first approach we develop a Kalman Expectation Conditional Maximization (KECM) algorithm
to determine the unknown covariance as well as detecting the jumps.   For this algorithm we consider Laplace and the spike and slab jump models, both of which promote sparse estimates of the jumps.
In the second method we take a Bayesian approach and use Gibbs sampling to sample from the posterior distribution of the covariance matrix under the spike and slab jump model.  Numerical results using simulated data show that each of these approaches
provide for improved covariance estimation relative to the KEM method in a variety of settings where jumps occur.
\end{abstract}
\section{Introduction}
 The covariance matrix of asset returns is an integral element of many financial optimization problems such as portfolio design.   For example, in minimum variance portfolio optimization the criterion for selecting the portfolio weights ,$w$, can be written as
\begin{align*}
\min_{w} w^{T}\Gamma w \\
\textnormal{s.t.} \sum_{i} w_{i} =1
\end{align*}
where $\Gamma$ is the covariance matrix of the asset returns.   Since the covariance matrix is usually unknown, the above criterion cannot be implemented exactly.   Instead an estimate of the covariance matrix, $\hat{\Gamma}$, is obtained and substituted into the portfolio optimization criterion.

A simple and intuitive approach to estimating the covariance matrix is to form a sample average of the covariance matrix using from past return data.  However  when a finite number of samples are used, covariance estimation errors will be present.   These errors can result in portfolio performance that departs significantly from the optimal performance under known statistics \cite{DeMiguelOneOverN,Barry1974,Jobson1980}.  Thus for portfolio optimization to be effective an accurate estimate of the covariance matrix is paramount.

Appealing to the law of large numbers covariance estimation errors can be reduced by using more data in the sample average estimate.  One approach to obtain more data is to simply increase the time window size when forming the sample covariance (e.g. use 1 year of data vs 3 months of data).
In order for this approach to be effective the additional data used in covariance estimation should be nearly identically distributed to future data.   If the data statistics are non-stationary then increasing the window's size to obtain more data may not improve portfolio performance as the additional data used in the covariance estimation may not be relevant to future returns.

 Another approach to obtaining more data is to sample at a higher frequency \cite{SahaliaHowOften} (e.g. 1 second update rate vs 1 day update rate) and maintain the sampling window size.   This approach is less vulnerable to non-stationary statistics but presents additional challenges unique to high frequency data.   For example,  high frequency data is subject to market microstructure noise \cite{LoAndrewBook} such as bid-ask bounce which can corrupt volatility and covariance estimates.   At higher frequencies the variance of the market microstructure noise can mask the true volatility of the asset returns if it is not accounted for \cite{BandiRussell,SahaliaHowOften}.     Asynchronous trading of assets observed at higher frequencies \cite{LoNonSync} further complicates covariance estimation as the standard sample average estimate assumes return data is available at each time instance.

 Many approaches have been proposed for estimating covariance matrices from high frequency data in the presence of asynchronous trading and microstructure noise.  For example, the refresh-time approach proposed in \cite{RefreshTime} addresses asynchronous trading by attempting to synchronize the return data by waiting for all assets to trade at least one time prior to forming a asset price vector used in covariance estimation.   One disadvantage of this approach is that much of the data is ignored while waiting for all assets to trade.   The pairwise refresh approach \cite{VastVolMatrix} uses more data by refreshing the covariance matrix element by element.   Here we form a $2 \times 1$ asset price vector every time period where two assets trade.   This allows for more data to be used but the resulting sample covariance matrix is not guaranteed to be positive semi-definite without applying additional corrections such as a projection method \cite{VastVolMatrix}.
 Another approach is the previous tick method employed in \cite{ZhangEpps} where a fixed sampling grid is defined and trade prices are approximated on that grid as the nearest previous trade price.

 To address both micro-structure noise and asynchronous returns, quasi-maximum likelihood estimators were proposed in \cite{SahaliaHighFreqCovariance,LiuQuasiMLE} that utilize pairwise refresh.  A two scale realized covariance (TSCV) approach was developed in \cite{ZhangEpps} where covariance estimates are obtained using both low frequency and high frequency sampling.   An approach based on Kalman filtering and the EM algorithm \cite{KalmanEMStocks}, models the true unobserved log-price process and observed prices as a discrete linear normal dynamical system.  Here the unobserved synchronous true price is treated as latent data and the EM algorithm is used to determine a maximum-likelihood estimate of the covariance.    A Bayesian version of the Kalman-EM approach where the posterior distribution of the covariance is approximated via an augmented Gibbs sampler is proposed in \cite{BayesianEM}.   This technique generates an estimate of the posterior distribution of the covariance which can then be used to obtain to a point estimate.

Each of the above techniques addressing micro-structure noise and asynchronous returns utilize a log-normal price model.   However, empirical return data often exhibits heavy tails that are better explained by a jump diffusion or stochastic volatility models.  Under these conditions the approaches which assume log-normal returns will yield sub-optimal results.   Techniques for addressing jumps have been proposed in the literature.  In \cite{FanJumps} the authors propose wavelet techniques for detecting jumps with an application to volatility estimation.  The jumps estimated using this approach are then removed from the observed data prior to volatility estimation.   In \cite{BoudtJump} a jump detector is employed to selectively remove data that contain jumps from the covariance estimation samples prior to TSCV.   Another technique proposed in \cite{BoudtJump2} is also robust to jumps but does not address market microstructure noise.

In this paper we extend the Kalman-EM approach in \cite{KalmanEMStocks} to discretized jump diffusion models by introducing two Kalman-ECM (KECM) approaches.  In our first KECM approach we model the jumps as Laplace distributed random variables.   Although the Laplace prior may seem to be an unnatural model for a jump process, we will see that the prior promotes a sparse posterior mode for the jumps by inducing an $\ell_{1}$ norm penalty on the jumps into the complete log-likelihood function.  Conditioned on other variables determining the posterior mode for the jumps is a convex $\ell_{1}$ norm penalized quadratic program which can be solved with a variety of fast techniques \cite{GoldStein1,ADMM,FISTA}.   In our second KECM approach we consider a more natural, but less tractable, spike and slab model for the jump process.

We also extend the Bayesian approach in \cite{BayesianEM} to jump models where jumps are modeled using a spike and slab prior \cite{SpikeSlab,BayesL1}.   Here we use Gibbs sampling to approximate the posterior of the jumps along with the unknown covariance matrix.   An estimate of the posterior mean of the covariance matrix is then obtained using the samples obtained from the posterior distribution.

The remainder of this paper is organized as follows.    In section 2 we introduce the models which form the basis for our covariance estimation approaches.   In section 3 and 4 we describe numerical algorithms for computing the covariance estimate with both the Laplace and spike and slab prior.   A performance evaluation of our proposed approach is presented in section 5 using simulated high frequency data.   A summary and conclusion are presented in section 6.

%\begin{figure}
%\input{FigBayesNet}
%\end{figure}

\section{High Frequency Return Modeling}\label{Sec:ModelHigh}
Suppose that we have $N$ assets where the true (or efficient) log price of the $n^{th}$ asset at time $t$ is $X_{n}(t)$.   Let $X(t)$ denote the $N \times 1$ vector of log prices for each asset at time $t$ and let $T$ denote the total number of time samples.   Here $X_{n}(t)$ can be viewed as the fundamental value of the asset in an efficient market without friction \cite{Roll84}.

We model the dynamics of the log prices using a discrete time jump diffusion model with a drift $D$%\footnote{The drift of the price of asset $i$ will be $D_{i}+0.5\Gamma_{i,i}$.}.
\begin{equation}\label{eqX}
  X_{i}(t)=X_{i}(t-1) + V_{i}(t) +\tilde{J}_{i}(t)Z_{i}(t)+D.
\end{equation}
Here we assume the following:
\begin{itemize}
  \item $V(t)$ is multivariate normally distributed with mean $0$ and covariance $\Gamma$
  \item $\tilde{J}_{i}(t)$ is normally distributed with zero mean and variance $\sigma_{j,i}^{2}(t)$
  \item  $Z_{i}(t)$ is Bernoulli distributed, with $Pr(Z_{i}(t)=0)=\zeta$
  \item $ \tilde{J}_{m}(t) \ci \tilde{J}_{n}(s),Z_{m}(t) \ci Z_{n}(s) \;\; , m \ne n$ and all $t,s$
  \item $\tilde{J},Z,V$ are jointly independent.
\end{itemize}
To simplify notation we denote the jump component as
\begin{equation}\label{eq:DefJump}
  J(t)=\tilde{J}(t)Z(t).
\end{equation}

In many markets trading of distinct assets does not occur simultaneously.   When trades occur asynchronously, current pricing data for all assets will not be observed.   For prices that are observed, market microstructure noise needs to be addressed.  Here transaction costs due to order processing expenses, inventory costs and adverse selection costs \cite{LoAndrewBook} add noise to the true efficient price.  Thus the true efficient price is not directly observed.   \begin{comment}The last factor which is caused by asymmetries in information between market makers and informed investors which has been hypothesised to induce statistical dependence between the noise and the price innovation $X_{i}(t)-X_{i}(t-1)$ \cite{BidAskZhang}.\end{comment}

Both asynchronous returns and microstructure noise can be captured in the following observation model
\begin{equation}\label{eqPriceObserved}
  Y(t) = \tilde{I}(t)X(t)+W(t)
\end{equation}
where
\begin{itemize}
  \item $\tilde{I}(t)$ is a``partial'' identity matrix where the rows corresponding to missing asset prices at time $t$ are removed
  \item $W(t)$ is normal distributed market microstructure noise with zero mean and covariance $\Sigma_{o}(t)=\tilde{I}(t)\Sigma_{o}'\tilde{I}(t)^{T}$.
\end{itemize}
Here $\Sigma_{o}'$ is a diagonal matrix $\textnormal{diag}(\sigma_{o,1}^{2}, \dots,\sigma_{o,N}^{2})$.
For purposes of this paper we shall assume that $\tilde{I}(t)$ is known, and that $\{W(t),X(t)\}_{t=1}^{T}$ are jointly independent.   In section \ref{sec:ECM_numerical} we will test our algorithms on simulated data where the microstructure noise and price innovation are statistically dependent.

\subsection{Conditional Distributions of Observations and Log-Prices}

Now we examine the joint probability distribution of $X(1:T)$,$Y(1:T)$,$J(2:T)$.   Here the notation $X(m:n)$ refers to the set $\{X(m),X(m+1),\dots,X(n)\}$.   We consider the case of when the parameters $D,\Gamma,\sigma_{o,i}^{2},\zeta$ and $\sigma_{j,i}^{2}$ are random variables with known prior distributions.   Details on our assumed priors are given in section \ref{sec:Priors}.

To determine the probability distribution we first note that our model in equations \eqref{eqX} and \eqref{eqPriceObserved} can be represented by the Bayesian network depicted in Figure \ref{Fig:BayesNetDet}.  From the Bayesian network we see that the following conditional independence properties hold
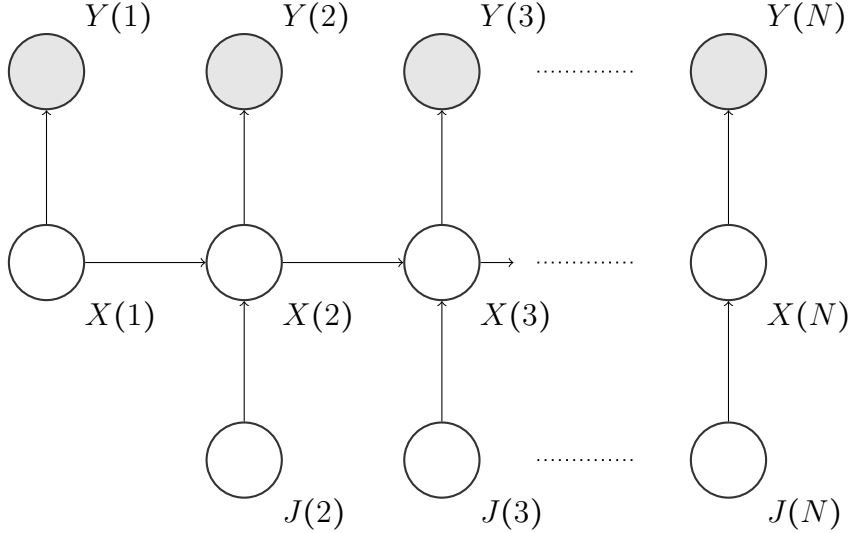
\begin{figure}\label{Fig:BayesNetDet}
\centering
\begin{tikzpicture}
\tikzstyle{main}=[circle, minimum size = 10mm, thick, draw =black!80, node distance = 16mm]
\tikzstyle{connect}=[-latex, thick]

%%%%%%%%%%%%%X
  % \tikzstyle{box}=[rectangle, draw=black!100]
  \node[main, fill = white!100] (x1) [label=below right:$X(1)$] { };
  \node[main] (x2) [right=of x1,label=below right:$X(2)$] { };
  \node[main] (x3) [right=of x2,label=below right:$X(3)$] { };
  \coordinate[right of =x3,node distance=0.5in](emptyCoord1);
  \coordinate[right of =emptyCoord1,node distance=0.5in](emptyCoord2);
  \node[main] (xN) [right of= emptyCoord2,label=below right:$X(N)$,node distance=0.5in] { };
 %%%%%%%%%%%%%X

  \node[main,fill = black!10] (y1)[above of =x1,label=above right:$Y(1)$,node distance=1in] { };
  \node[main,fill = black!10] (y2) [right=of y1,label=above right:$Y(2)$] { };
  \node[main,fill = black!10] (y3) [right=of y2,label=above right:$Y(3)$] { };
  %\node[main,fill = black!10] (theta)[below of =x1,label=below right:$\Theta$,node distance=0.5in] { };
  \coordinate[right of =y3,node distance=0.5in](YemptyCoord1);
  \coordinate[right of =YemptyCoord1,node distance=0.5in](YemptyCoord2);
  \node[main,fill = black!10] (yN) [right of= YemptyCoord2,label=above right:$Y(N)$,node distance=0.5in] { };

  \node[main] (j2) [below=of x2,node distance=1in,label=below right:$J(2)$] { };
  \node[main] (j3) [right=of j2,label=below right:$J(3)$] { };

  \coordinate[right of =j3,node distance=0.5in](jemptyCoord1);
  \coordinate[right of =jemptyCoord1,node distance=0.5in](jemptyCoord2);
  \node[main] (jN) [right of= jemptyCoord2,label=below right:$J(N)$,node distance=0.5in] { };

  \coordinate[left of =YemptyCoord1,node distance=0.125in](YemptyCoord0);
  \coordinate[left of =emptyCoord1,node distance=0.125in](emptyCoord0);

  %\node[main] (z) [right=of x2,label=below:z] {};
  %\node[main] (beta) [above=of z,label=below:$\beta$] { };
  %\node[main, fill = black!10] (w) [right=of z,label=below:w] { };
  %\path (x1) edge [connect] (x2)
  \draw[->,label] (x1) to (x2);
  \draw[dotted,line width=0.25mm] (emptyCoord1) to[dotted] (emptyCoord2);
  \draw[dotted,line width=0.25mm] (YemptyCoord1) to[dotted] (YemptyCoord2);
  \draw[dotted,line width=0.25mm] (jemptyCoord1) to[dotted] (jemptyCoord2);
  \draw[->] (x1) to (y1);
  \draw[->] (x2) to (y2);
  \draw[->] (x3) to (y3);
  \draw[->] (x2) to (x3);
  \draw[->] (xN) to (yN);
  \draw[->] (jN) to (xN);
  \draw[->] (j2) to (x2);
  \draw[->] (j3) to (x3);

  \draw[->] (x3) to (emptyCoord0);
  %\draw[->] (y3) to (YemptyCoord0);

  %\node[rectangle, inner sep=0mm, fit= (z) (w),label=below right:$N$, xshift=13mm] {};
  %\node[rectangle, inner sep=4.4mm,draw=black!100, fit= (z) (w)] {};
  %\node[rectangle, inner sep=4.6mm, fit= (z) (w),label=below right:$M$, xshift=12.5mm] {};
  %\node[rectangle, inner sep=9mm, draw=black!100, fit = (theta) (z) (w)] {};
\end{tikzpicture} 
\caption{Bayesian Network Representation of $(X,Y,J)$.   Observed variables are shaded.   Here the model parameters are not shown.}
\label{Fig:BayesNetDet}
\end{figure}
\begin{eqnarray*}
% \nonumber to remove numbering (before each equation)
  Y(t) \ci J(s) |X(t) \;\; \forall s  \\
  Y(t) \ci X(s) |X(t) \;\; \forall s \ne t \\
  X(t) \ci X(s) |(X(t-1),J(t)) \;\;  \forall s < t-1 \\
  X(t) \ci J(s) |(X(t-1),J(t)) \;\;   \forall s \ne t.
\end{eqnarray*}

From the conditional independence implied by the Bayesian network we have that the probability distribution conditioned on the parameter values may be fully characterized as follows
\begin{eqnarray*}
% \nonumber to remove numbering (before each equation)
  p(y(t)|x(1:T),\Sigma_{o}^{2}(t)) &\sim& \mathcal{N}(\tilde{I}(t)x(t),\Sigma_{o}(t)) \\
  p(x(t+1)|x(1:t),j(2:t+1),d,\Gamma) &\sim& \mathcal{N}(x(t)+j(t+1)+d,\Gamma) \\
  p(x(1)) &\sim& \mathcal{N}(\mu,K)\\
  p(j(t)|\zeta,\sigma_{j}^{2}(t)) &\sim& \prod_{i=1}^{N}f(j_{i}(t)) \label{eqPjump}.
\end{eqnarray*}
Here $f$ is the spike and slab prior
\begin{equation}\label{eqSpikeSlab}
  f(j_{i}(t))= \zeta\delta_{0}(j_{i}(t)) + \frac{1-\zeta}{\sqrt{2\pi}\sigma_{j,i}(t)}\exp\left(-\frac{j_{i}(t)^{2}}{2\sigma_{j,i}^{2}(t)}\right)
\end{equation}
with $\delta_{0}$ being a point mass distribution at 0.   The initial time parameters, $\mu$ and $K$ can be chosen based on prior stock return data and will be treated as known values.
\subsection{Prior Distribution of Parameters}\label{sec:Priors}
To allow for more flexible modeling we shall impose prior distributions on the parameters $D,\Gamma,\sigma_{o,i}^{2}$ as well as the jump parameters $\zeta$ and $\sigma_{j,i}^{2}$.   Here we take a commonly used approach of using conjugate prior distributions which facilitate  calculation of conditional maximum a posteriori (MAP) parameter estimates.   These priors will play an essential part in the proofs of convergence for the ECM algorithm presented in Section \ref{Sec:ECM}.

The drift parameter $D$ is modeled as normally distributed with mean $\bar{D}$ and covariance $\sigma_{D}^{2} I$
\begin{equation*}
  D \sim \mathcal{N}(\bar{D}, \sigma_{D}^{2} I),
\end{equation*}
which is conjugate to the multivariate normal distribution given above.
For the covariance matrix prior we use an inverse Wishart prior (which is also conjugate to the multivariate normal) with $\eta > N-1$ degrees of freedom and positive definite scale matrix $W_{o}$
\begin{equation*}
  \Gamma \sim \mathcal{W}^{-1}(W_{o},\eta).
\end{equation*}
In the observation noise variance,$\sigma_{o,i}^{2}$, we impose a inverse gamma distribution with shape parameter $\alpha_{o}>0$ and scale $\beta_{o}>0$
\begin{equation*}
  \sigma_{o,i}^{2} \sim IG(\alpha_{o},\beta_{o}).
\end{equation*}
Finally for the jump parameters $\zeta$ and $\sigma_{j}^{2}$ we use the beta distribution and inverse gamma distribution as priors
\begin{equation*}
  \zeta \sim \textnormal{Beta}(\alpha_{\zeta},\beta_{\zeta})
\end{equation*}
\begin{equation*}
  \sigma_{j,i}^{2}(t) \sim IG(\alpha_{j},\beta_{j}).
\end{equation*}
We assume that $\zeta$ and $\sigma_{j,i}^{2}(t)$ are independent and that the parameters in each of the prior distributions is known.   For each of these priors the hyperparameters may be selected to make them relatively uninformative.

\subsection{Mixture Model Representation}
We may also represent our jump model as mixture model with $2^{TN-N}$ components.   To see this we condition on $Z(1:T)$ and obtain the following
\begin{eqnarray*}
% \nonumber to remove numbering (before each equation)
  p(y(t)|x(1:T),\Sigma_{o}(t)) &\sim& \mathcal{N}(\tilde{I}(t)x(t),\Sigma_{o}(t)) \label{eqMix1} \\
  p(x(t)|x(1:t-1),z(2:t),d,\Gamma) &\sim& \mathcal{N}(x(t-1)+d,\Gamma+\textnormal{Diag}(t,z(t))) \\
  p(x(1)) &\sim& \mathcal{N}(\mu,K)\\
  p(z(t)|\zeta) &\sim& \zeta^{TN-T_{J}}(1-\zeta)^{T_{J}} \label{eqMix2}
\end{eqnarray*}
where $T_{J}$ is the total number of jumps
\begin{equation*}
  T_{J}=\sum_{i,t} z_{i}(t)
\end{equation*}
and where $\textnormal{Diag}(t,z(t))$ is the diagonal matrix
\begin{eqnarray*}
\textnormal{Diag}(t,z(t))= \left( \begin{matrix}
               z_{1}(t)\sigma_{j,1}^{2}(t) & 0 & \dots & 0 \\
               0 & \ddots & \ddots & \vdots \\
               \vdots & \ddots  & \ddots & 0 \\
               0 & \dots & 0 &z_{N}(t)\sigma_{j,N}^{2}(t)  \\
             \end{matrix} \right).
\end{eqnarray*}
Here we see that the covariance is time-varying and at time $t$ is equal to $\Gamma + \textnormal{Diag}(t,Z(t))$.   Thus our model is equivalent to a large switching state space model \cite{Ghahramani} with a log-price posterior distribution consisting of $2^{TN-N}$ components.

\subsection{Laplace Prior Approximation}\label{Sec:Laplace}
Recall from the previous section that jump model is equivalent to a switching state space model.   Inference in switching state space models becomes intractable as the number of states increase \cite{Ghahramani}.   For example estimation of the posterior distribution of $X$ given $Y$ involves marginalizing out the $2^{TN-N}$ possible states for $Z$, which is an intractable integral.   Maximum a posteriori (MAP) estimation of $Z$ is also difficult due to the multimodal structure of $p(z)$.

In this section we approximate the distribution of $J$ using a Laplace distribution.   We denote the Laplace distribution for $J$ as $g(j)$
\begin{equation*}
  p(j) \approx g(j|\lambda) \doteq \prod_{i,t}\frac{\lambda_{i}(t)}{2}\exp\left(-\lambda_{i}(t)|j_{i}(t)|\right)
\end{equation*}
where $\lambda_{i}(t)>0$.

There are two advantages to taking this approximation.   First the log-likelihood of a Laplace distribution is concave in its parameter.   This aids in conditional MAP estimation of $J$.  Secondly, the Laplace distribution is desired in that it promotes sparse MAP estimates of $J$ \cite{Seeger1,BoydRobustKalman,AravkinL1Laplace} making it a good approximation to infrequent jumps.    We illustrate this with the following example.
\begin{example}
Suppose that $\kappa$ is a Laplace distributed random variable with parameter $2$ and $\eta=\kappa + q$ where $q$ is independent of $\kappa$ and normally distributed with mean 0 and variance 1.   Suppose $\eta$ is observed to be 0.5.  Then the likelihood of $\kappa$ given $\eta$ is $\mathcal{N}(0.5,1)$ but the posterior of $\kappa$ has its mode at 0 as shown in Figure \ref{fig:LaplacePrior}.

\end{example}
\begin{figure}[h]
    \centering
    \includegraphics[width=5in]{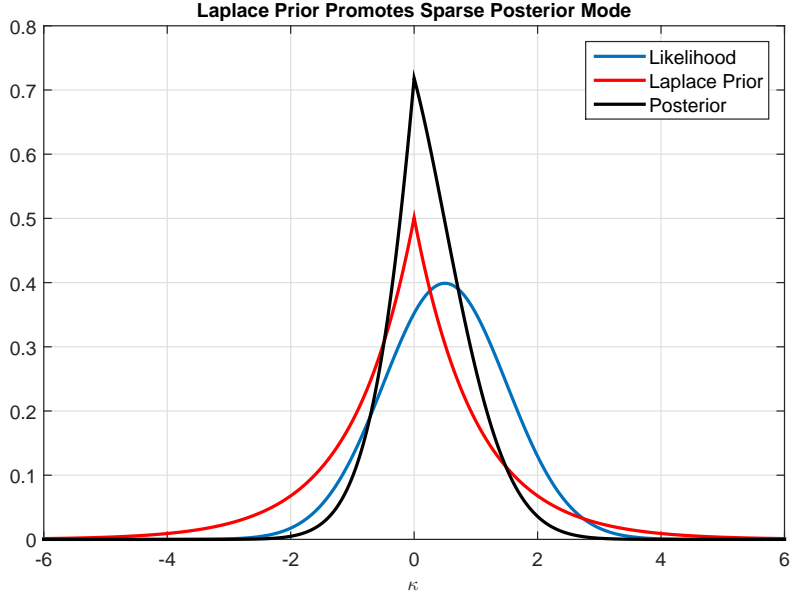}
    \caption{Here we show an example of a Laplace prior promoting a posterior mode at 0.}
    \label{fig:LaplacePrior}
\end{figure}

To make the model more robust we will not assume that each $\lambda_{i}(t)$ is known.   Instead we will estimate $\lambda_{i}(t)$ from the data.   Since the problem of estimating both $J_{i}(t)$ and $\lambda_{i}(t)$ is ill-posed  we regularize it by introducing a  prior distribution on each $\lambda_{i}(t)$ which we denote as $q(\lambda)$.

We wish to design the prior distribution $q$ such that it induces a similar level of sparseness that is induced by the spike and slab prior $f$.
To develop a criterion for designing $q$ we first define a notion of similarity between $g(j|\lambda)$ and $f(j|,\zeta,\sigma_{j}^2)$.
\begin{Def}
Let $V$ be a zero-mean normal random variable with variance $\sigma_{v}^{2}$ and let $J_{1} \sim Laplace(\lambda')$ and $J_{2} \sim SpikeSlab(\zeta',\sigma_{j}^{2'})$ which are independent of $V$.     Define
 \begin{eqnarray}\label{eq:JumpLaplace}
 % \nonumber to remove numbering (before each equation)
   Y_{1}&=&J_{1}+V \nonumber \\
   Y_{2}&=&J_{2}+V. \nonumber
 \end{eqnarray}
 Then $Laplace(\lambda')$ is \textbf{$\sigma_{v}^{2}$-equivalent} to $SpikeSlab(\zeta',\sigma_{j}^{2'})$ (denoted $\lambda' \sim_{\sigma_{v}^{2}} (\zeta',\sigma_{j}^{2'}) $ ) if
\begin{equation}\label{eqSparseEquiv}
  \E_{p(y_{2}|J_{2}=0)}Pr(J_{2}=0|Y_{2}) = \E_{p(y_{1}|J_{1}=0)}Pr(\bar{J_{1}}=0)
\end{equation}
where $\bar{J_{1}}$ is the mode of $p(j_{1}|Y_{1})$.
\end{Def}
To interpret the above definition assume that a jump has not occurred.   Then $\lambda' \sim_{\sigma_{v}^{2}} (\zeta',\sigma_{j}^{2'}) $ if the probability of falsely declaring a jump under the $Laplace(\lambda')$ model (with MAP criterion) equals the average posterior probability of a jump under the spike and slab prior with parameters $\zeta'$ and $\sigma_{j}^{2'}$.
Here $\sigma_{v}^{2}$ can be interpreted as the squared volatility of the diffusion component of the asset returns.   Note that for each triplet $(\sigma_{v}^{2},\zeta',\sigma_{j}^{2'})$ there is a unique $\lambda'$ such that $\lambda' \sim_{\sigma_{v}^{2}} (\zeta',\sigma_{j}^{2'})$.

Since $(\sigma_{v}^{2},\zeta',\sigma_{j}^{2'})$ are random and unobserved we cannot directly select a $\lambda'$ such that $\lambda' \sim_{\sigma_{v}^{2}} (\zeta',\sigma_{j}^{2'})$.  However the distribution of $(\sigma_{v}^{2},\zeta_{o}',\sigma_{j}^{2'})$  induces a distribution on $\lambda$ through the mapping $\sim_{\sigma_{v}^{2}}$.   The resulting distribution can then be used as a prior  $q(\lambda)$.   The following section presents an example on how to construct a distribution for $\lambda$.
\subsection{Procedure for selecting $q(\lambda)$}\label{Sec:ProcedureQlam}
In this section we outline the method for selecting the distribution $q(\lambda)$ for a special case of when the prior distribution of volatility of each asset is identical.  Suppose  the squared volatility of each asset return is inverse gamma distributed with scale $c$ and shape $\eta$.   Let $\sigma_{v}^{2}$ be distributed as $IG(c,\eta)$ and be statistically independent of $\zeta'$ and $\sigma_{j}^{2}$.

To determine an appropriate prior distribution of $\lambda$ we first obtain samples of $\lambda$, ($\tilde{\lambda}_{1}, \dots, \tilde{\lambda}_{M_{\lambda}}$) by the performing the following steps
  \begin{enumerate}
    \item For $k=1, \dots, M_{\lambda}$
    \item Draw independent samples from the distribution of $(\sigma_{v}^{2'},\zeta',\sigma_{j}^{2'})$.   This is relatively straight forward using standard statistical functions due to the independence assumptions.
    \item Determine a $\lambda'$ such that $\lambda' \sim_{\sigma_{v}^{2'}} (\zeta',\sigma_{j}^{2'})$.   This can be done via Monte Carlo integration as shown below.
    \begin{itemize}
       \item For a large number $L$ draw a sample $v_{1} \dots v_{L}$ from the distribution $\mathcal{N}(0,\sigma_{v}^{2})$.
       \item Compute $P_{i}=Pr(J=0|J+V=v_{i})$, where $J \sim SpikeSlab(\zeta',\sigma_{j}^{2'})$.   The value of $P_{i}$ is
           \begin{equation*}
             \frac{ \frac{\zeta'}{\sqrt{\sigma_{v}^{2'}}}\exp(-v_{i}^{2}/(2\sigma_{v}^{2}))  }{\frac{\zeta'}{\sqrt{\sigma_{v}^{2'}}}\exp(-v_{i}^{2}/(2\sigma_{v}^{2'})) + \frac{1-\zeta'}{\sqrt{\sigma_{v}^{2'}+\sigma_{j}^{2'}}}\exp(-v_{i}^{2}/(2( \sigma_{v}^{2'}+\sigma_{j}^{2'})))}.
           \end{equation*}
       \item Compute the simulated empirical mean $\bar{P}=\frac{1}{L}\sum_{i=1}^{L} P_{i}$.
       \item Choose $\lambda'$ such that \eqref{eqSparseEquiv} is satisfied with $\E_{p(y_{2}|J_{2}=0)}Pr(J_{2}=0|Y_{2})$ approximated as $\bar{P}$.  This value is given below
            \begin{equation*}
             \lambda' =\frac{\textnormal{erf}^{-1}(\bar{P})\sqrt{2\sigma_{v}^{2'}}}{\sigma_{v}^{2'}}
            \end{equation*}
           where $\textnormal{erf}^{-1}()$ is the inverse error function.
   \end{itemize}

    \item Set $\tilde{\lambda}_{k}=\lambda'$
    \item Goto step 1

  \end{enumerate}
  Examples of histograms of samples obtained using the above procedures are shown in Figures \ref{fig:qLamVsSigV} - \ref{fig:qLamVsZeta}.
  Once we obtain samples of $\lambda$ we fit a smooth distribution to the sampled data.  Since the gamma distribution is a conjugate prior to the Laplace distribution a gamma distribution is a convenient choice for $q(\lambda)$.   Furthermore examination of Figures \ref{fig:qLamVsSigV} -\ref{fig:qLamVsZeta} indicate that a gamma distribution is a reasonable approximation.   Thus we choose
  \begin{equation*}
    q(\lambda) = \frac{\beta_{\lambda}^{\alpha_{\lambda}}}{\Gamma_{f}(\alpha_{\lambda})} \lambda^{\alpha_{\lambda}-1}\exp\left(-\lambda \beta_{\lambda}\right)
  \end{equation*}
  where $\Gamma_{f}()$ is the gamma function.  Here $\alpha_{\lambda}$ and $\beta_{\lambda}$ can be selected using maximum likelihood or method of moments.

  Since $q(\lambda)$ develops a singularity near zero for large values of $\beta_{\lambda}$ we shall impose a prior of $\lambda^{-1}$ rather than $\lambda$.    We denote this prior as $q_{inv}(\lambda^{-1})$.   Since $\lambda$ is gamma distributed with shape $\alpha_{\lambda}$ and rate $\beta_{\lambda}$ it follows that $q_{inv}(\lambda^{-1})$ is the inverse gamma distribution with shape $\alpha_{\lambda}$ and scale $\beta_{\lambda}$
  \begin{equation*}
    q_{inv}(\lambda^{-1})=\frac{\beta_{\lambda}^{\alpha_{\lambda}}}{\Gamma_{f}(\alpha_{\lambda})}(\lambda^{-1})^{-\alpha_{\lambda}-1}\exp\left(-\frac{\beta_{\lambda}}{\lambda^{-1}}\right).
  \end{equation*}
\begin{figure}[h]
    \centering
    \includegraphics[width=5in]{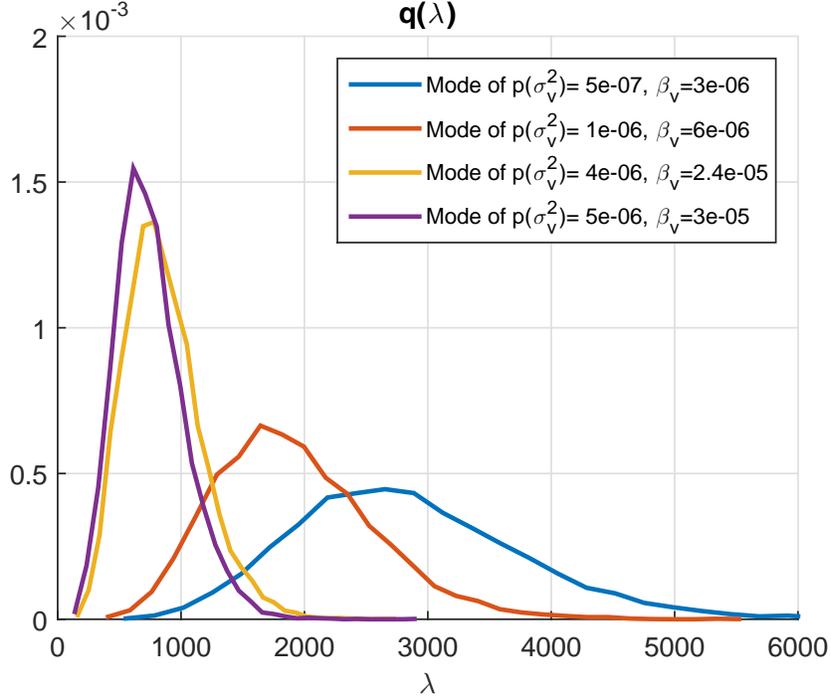}
    \caption{Normalized histograms of $\lambda$ samples.  In all experiments $\sigma_{j}^{2} \sim IG(10,0.0011)$,$\zeta \sim \textnormal{Beta}(5,1.0201)$, $\sigma_{v}^{2} \sim IG(5,\beta_{v})$.}
    \label{fig:qLamVsSigV}
\end{figure}
\begin{figure}[h]
    \centering
    \includegraphics[width=5in]{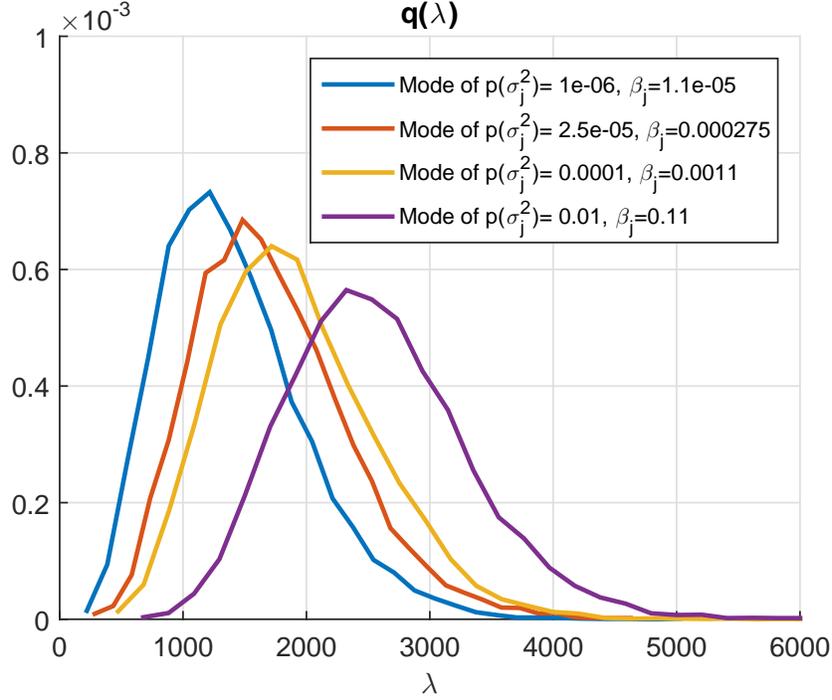}
    \caption{Normalized histograms of $\lambda$ samples.  In all experiments $\sigma_{j}^{2} \sim IG(10,\beta_{j})$,$\zeta \sim \textnormal{Beta}(5,1.0201)$, $\sigma_{v}^{2} \sim IG(5,6e-6)$.}
    \label{fig:qLamVsJump}
\end{figure}
\begin{figure}[h]
    \centering
    \includegraphics[width=5in]{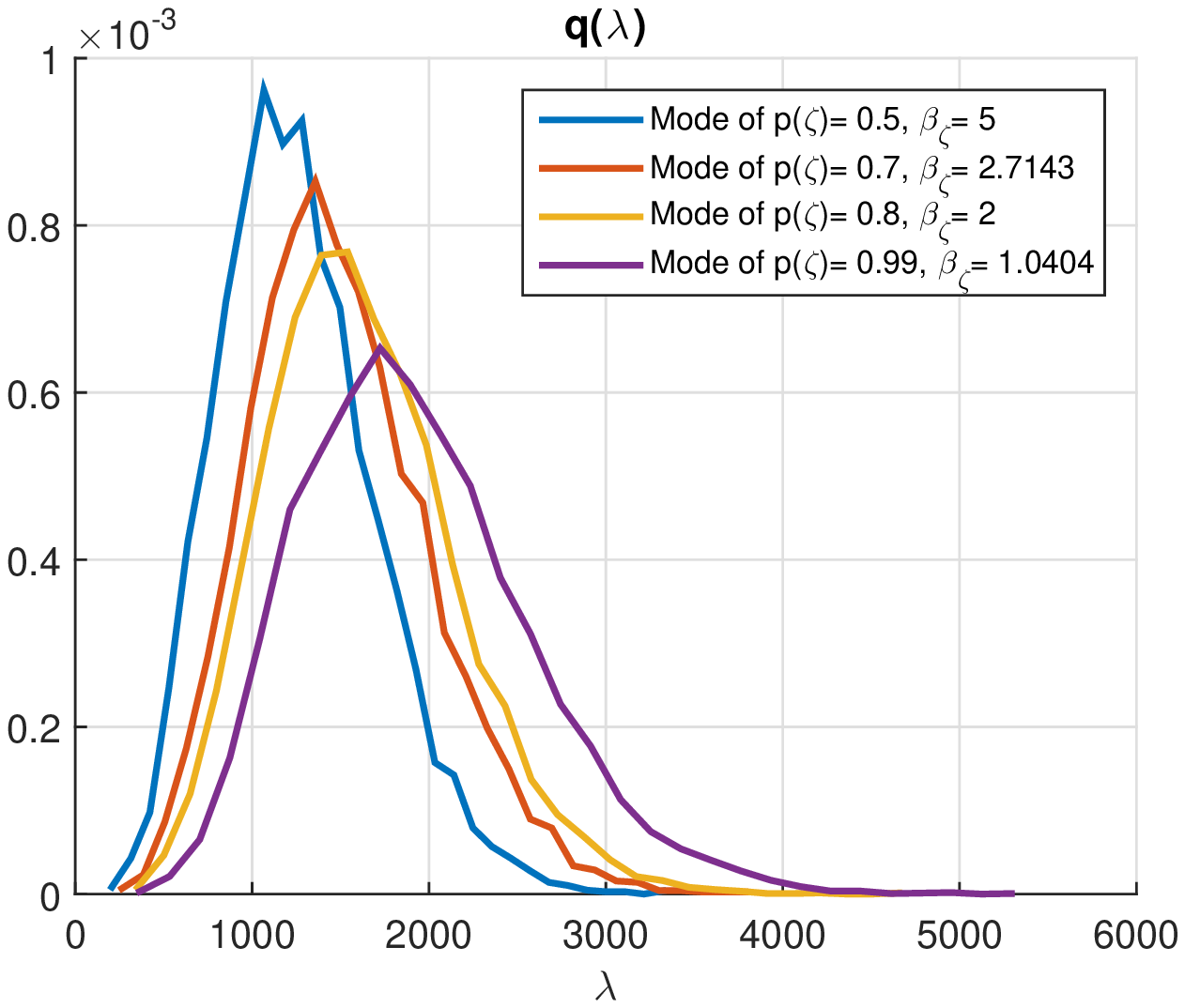}
    \caption{Normalized histograms of $\lambda$ samples.  In all experiments $\sigma_{j}^{2} \sim IG(10,0.0011)$,$\zeta \sim \textnormal{Beta}(5,\beta_{\zeta})$, $\sigma_{v}^{2} \sim IG(5,6e-6)$.}
    \label{fig:qLamVsZeta}
\end{figure}

\section{KECM Approach to estimation of $\Gamma$}
\label{Sec:ECM}
Maximum a posteriori (MAP) estimation of $\Gamma$ with Kalman ECM (KECM) techniques is investigated in this section.  The first ECM approach is an approximate technique where the prior distribution on the jumps is modeled as a Laplace distribution.  The advantage of this approximation is that the conditional  maximization steps in the ECM approach result in global (conditional) optimal solutions can be obtained.
The disadvantage is that we are approximating the true spike and slab jump model.   The second approach uses the spike and slab model for jumps, which is a true representation of the model presented in Section \ref{Sec:ModelHigh}.   However we will see that using the spike and slab jump model results in a non-concave optimization problem in the conditional M-step for $J$.

\subsection{KECM algorithm for Laplace Distribution}
First we consider a KECM approach to estimating $\Gamma$ when $J_{i}$ is approximated by a Laplace distributed random variable.  We define
\begin{equation*}
\Theta = [\Theta_{1},\Theta_{2},\Theta_{3},\Theta_{4},\Theta_{5}]
\end{equation*}
where
\begin{eqnarray*}
% \nonumber to remove numbering (before each equation)
\Theta_{1} &=& D  \\
\Theta_{2} &=& \Gamma \\
\Theta_{3} &=& \sigma_{o,i}^{2}, 1 \le i \le N \\
\Theta_{4} &=& J(2:T) \\
\Theta_{5} &=& \lamSetInv
\end{eqnarray*}
as our vector of unknown parameters and $X(1:T)$ as the latent variables.

The KECM approach is an iterative algorithm that can be applied to the following problem
\begin{equation*}
  \Theta^{*} = \arg\max_{\theta} L(\theta)
\end{equation*}
where $L(\theta)$ is the log posterior of $\Theta$.   In the KECM algorithm we iterate over E-steps and conditional M-steps to arrive at an estimate of $\Theta$.

The E-step in the KECM algorithm involves computing the expected value of
\begin{equation*}
  \log p(X(1:T),y(1:T)|\theta)p(\theta)
\end{equation*}
with respect to $p(x(1:T)|y,\Theta^{(k)})$
\begin{equation*}
  \mathcal{G}(\theta,\Theta^{(k)}) = \mathbb{E}_{p(x|y,\Theta^{(k)})}\log p(X(1:T),y(1:T)|\theta)+\log(p(\theta))
\end{equation*}
where $\Theta^{(k)}$ is an estimate of $\Theta$ at the $k^{th}$ iteration and where $p(\theta)$ is the prior distribution of parameters
\begin{equation*}
p(\theta)=p(\theta_{1})p(\theta_{2})p(\theta_{3})g(\theta_{4},|\lambda)q_{inv}(\lambda^{-1}).
\end{equation*}
Here the complete log-likelihood is
\begin{eqnarray*}
  \log p(x,y|\theta) &=& -0.5\sum_{t=1}^{T}\log(|\Sigma_{o}(t)|) -\frac{1}{2}\sum_{t=1}^{T}||y(t)-\tilde{I}(t)x(t)||_{\textnormal{diag}(\Sigma_{o}(t)^{-1}),\ell_{2}}^{2} \nonumber \\
   && - \frac{T-1}{2}\log(|\Gamma|) \nonumber \\
   && - \frac{1}{2}\sum_{t=2}^{T}r(t)^{T}\Gamma^{-1}r(t) \nonumber \\
   && + const
\end{eqnarray*}
where
\begin{equation*}
   r(t)=x(t)-x(t-1)-d-j(t).
\end{equation*}
and where
\begin{equation*}
||q||_{\beta,\ell_{2}}^{2} = \sum_{i}\beta_{i}q_{i}^{2}.
\end{equation*}

It is well known that the function $\mathcal{G}(\theta,\Theta^{(k)})$ serves as a lower bound to $\log p(\theta,y)$ and that
$\log p(\Theta^{(k)},y) = \mathcal{G}(\Theta^{(k)},\Theta^{(k)})$ \cite{EMRubin}.

The EM approach prescribes a joint maximization of $\mathcal{G}(\theta,\Theta^{(k)})$  with respect to $\theta$.   This is difficult due to the coupling of variables and the non-concavity of the problem.   Conditional maximization of each parameter in turn is more tractable.  Thus we apply conditional maximization as in the ECM \cite{ECMRubin} algorithm.   The conditional M-steps involves a coordinate-wise maximization of $\mathcal{G}$.  Here the conditional M-steps are
\begin{eqnarray}
  \Theta_{1}^{(k+1)} = \arg\max_{\theta_{1}} \mathcal{G}\left(\left[\theta_{1},\Theta_{2}^{(k)},\Theta_{3}^{(k)},\Theta_{4}^{(k)},\Theta_{5}^{(k)} \right],\Theta^{(k)}\right) \nonumber \\
  \Theta_{2}^{(k+1)} = \arg\max_{\theta_{2}} \mathcal{G}\left(\left[\Theta_{1}^{(k+1)},\theta_{2},\Theta_{3}^{(k)},\Theta_{4}^{(k)},\Theta_{5}^{(k)}\right],\Theta^{(k)}\right) \nonumber \\
  \Theta_{3}^{(k+1)} = \arg\max_{\theta_{3}} \mathcal{G}\left(\left[\Theta_{1}^{(k+1)},\Theta_{2}^{(k+1)},\theta_{3},\Theta_{4}^{(k)},\Theta_{5}^{(k)}\right],\Theta^{(k)}\right) \nonumber \\
  \Theta_{4}^{(k+1)} = \arg\max_{\theta_{4}} \mathcal{G}\left(\left[\Theta_{1}^{(k+1)},\Theta_{2}^{(k+1)},\Theta_{3}^{(k+1)},\theta_{4},\Theta_{5}^{(k)}\right],\Theta^{(k)}\right) \nonumber \\
  \Theta_{5}^{(k+1)} = \arg\max_{\theta_{5}} \mathcal{G}\left(\left[\Theta_{1}^{(k+1)},\Theta_{2}^{(k+1)},\Theta_{3}^{(k+1)},\Theta_{4}^{(k+1)},\theta_{5}\right],\Theta^{(k)}\right).  \label{eqTheta3}
\end{eqnarray}
Each of these problems can be readily solved as we will show later.
\subsubsection{E-step of KECM}
%The complete log-posterior distribution $\log(p(x|y,\theta))$ separates into terms involving only $\lambda$ and only $x$ which simplifies the calculation.   The term involving only $\lambda$ is
%\begin{equation}\label{eq:LamOnlyTerm}
%  - \lambda\sum_{t=2}^{T}||j(t)||_{\ell_{1}} + const
%\end{equation}
%where $const$ is a term not involving $\theta$.   Taking the expectation of \eqref{eq:LamOnlyTerm} with respect to $p(x,\lambda|y,\theta)$ can be computed as
%\begin{equation}\label{eq:LamOnlyTermMean}
%   - \sum_{t=2}^{T}||j(t)||_{\ell_{1}}\left(\frac{\frac{1}{M_{\lambda}}\sum_{i=1}^{M_{\lambda}}\tilde{\lambda_{i}}\tilde{\lambda}_{i}^{(T-1)N}\exp(-\tilde{\lambda}_{i} \sum_{t=2}^{T}||j(t)||_{\ell_{1}})}{\frac{1}{M_{\lambda}}\sum_{i=1}^{M_{\lambda}}\tilde{\lambda}_{i}^{(T-1)N}\exp(-\tilde{\lambda}_{i} \sum_{t=2}^{T}||j(t)||_{\ell_{1}})}\right) + const
%\end{equation}
%Thus the effect of the E-step for \eqref{eq:LamOnlyTerm} is to replace $\lambda$ with it's posterior mean
%\begin{equation}\label{eq:BarLambda}
%  \bar{\lambda}=\left(\frac{\frac{1}{M_{\lambda}}\sum_{i=1}^{M_{\lambda}}\tilde{\lambda_{i}}\tilde{\lambda}_{i}^{(T-1)N}\exp(-\tilde{\lambda}_{i} \sum_{t=2}^{T}||j(t)||_{\ell_{1}})}{\frac{1}{M_{\lambda}}\sum_{i=1}^{M_{\lambda}}\tilde{\lambda}_{i}^{(T-1)N}\exp(-\tilde{\lambda}_{i} \sum_{t=2}^{T}||j(t)||_{\ell_{1}})}\right)
%\end{equation}
The posterior $p(x|y,\Theta^{(k)})$ needed to perform the E-step is normal and can be computed using a Kalman smoother \cite{Shumway}.
By normality and the Markov property the posterior is completely defined by the following posterior moments for $m=T$
\begin{equation*}
  \bar{X}(t|m) \doteq \E(X(t)|y(1:m))
\end{equation*}
\begin{equation*}
  P(t|m) \doteq \cov(X(t),X(t)|y(1:m))
\end{equation*}
\begin{equation*}
  P(t,t-1|m) \doteq \cov(X(t),X(t-1)|y(1:m))
\end{equation*}
where $\cov(:,:)$ refers to the covariance function.   Equations for these quantities are derived in \cite{ShumwayBook} and are stated in Appendix \ref{secAppendixSmooth}.
The expected value of log-posterior distribution with respect to the posterior of $X(1:T)$ can be shown to be
\begin{eqnarray}\label{eqPosteriorECM}
  \mathcal{G}(\theta,\Theta^{(k)})&=& \mathbb{E}_{p(x|y,\Theta^{(k)})}\log p(X(1:T),y(1:T)|\theta)+\log(p(\theta)) \nonumber \\
  &=&-\frac{T-1}{2}\log(|\Gamma|) - \frac{1}{2}\textnormal{tr}(\Gamma^{-1}(C-B-B^{T}+A)) \nonumber \\
   & & -0.5\sum_{t=1}^{T}\log(|\Sigma_{o}(t)|)\nonumber \\
   &&  -\frac{1}{2}\sum_{t=1}^{T}||y(t)-\tilde{I}(t)\bar{X}(t)||_{\textnormal{diag}(\Sigma_{o}(t)^{-1}),\ell_{2}}^{2} + \textnormal{tr}(P(t|T)\tilde{I}(t)^{T}\Sigma_{o}(t)^{-1}\tilde{I}(t)) \nonumber \\
   && + \log(p(\theta)) + const
\end{eqnarray}
where
\begin{equation*}
  A=\sum_{t=2}^{T} \left(P(t-1|T) + \bar{X}(t-1|T)\bar{X}(t-1|T)^{T} \right)
\end{equation*}
\begin{equation*}
  B=\sum_{t=2}^{T} \left(P(t,t-1|T) + (\bar{X}(t|T)-D^{(k)}-J^{(k)}(t))\bar{X}(t-1|T)^{T} \right)
\end{equation*}
\begin{equation*}
  C=\sum_{t=2}^{T} \left(P(t|T) + (\bar{X}(t|T)-D^{(k)}-J^{(k)}(t))(\bar{X}(t|T)-D^{(k)}-J^{(k)}(t))^{T} \right).
\end{equation*}
These equations are derived in Appendix \ref{secAppendixDerive}.   For notational convenience the dependence of $P(t|m)$ and $P(t,t-1|m)$ on the iteration number has been dropped.

\subsubsection{Conditional M-steps of KECM}
For the conditional M-step it can be shown using standard conjugate prior relationships \cite{FinkConjPrior} that
\begin{equation}\label{driftEM}
  %d^{(j+1)} = \frac{1}{T-1}\sum_{t=2}^{T}\bar{X}(t|T)-\bar{X}(t-1|T)-J^{(j)}(t)
  D^{(k+1)} = F\left ( \frac{1}{\sigma_{D}^2}\bar{D}+ \Gamma^{(k)^{-1}}\sum_{t=2}^{T}\bar{X}(t|T)-\bar{X}(t-1|T)-J^{(k)}(t) \right)
\end{equation}
and
\begin{equation}\label{GammaEM}
  \Gamma^{(k+1)} = \frac{1}{T-1+\eta} \left(A+C^{(k)}-B^{(k)}-B^{(k)T}\right) +\frac{1}{T-1+\eta}W
\end{equation}
where
\begin{equation*}
  F = \left((T-1)\Gamma^{(k)^{-1}}+\sigma_{D}^{-2}I \right)^{-1}
\end{equation*}
\begin{equation*}
  B^{(k)}=\sum_{t=2}^{T} \left(P(t,t-1|T) + (\bar{X}(t|T)-D^{(k+1)}-J(t)^{(k)})\bar{X}(t-1|T)^{T} \right)
\end{equation*}
and
\begin{eqnarray*}
  C^{(k)}&=& \sum_{t=2}^{T} P(t|T) \\
   &&  + \sum_{t=2}^{T} (\bar{X}(t|T)-D^{(k+1)}-J(t)^{(k)})(\bar{X}(t|T)-D^{(k+1)}-J(t)^{(k)})^{T}.
\end{eqnarray*}
%Both terms in $\eqref{GammaEM}$ are positive definite.

The conditional M-step for the observation noise variance is
\begin{equation}\label{obsEM}
  %\sigma_{o}^{2,(j+1)} = \frac{1}{M}\sum_{t=1}^{T}||Y(t)-\tilde{I}(t)\bar{X}(t|T)||_{\ell{2}}^{2} + \trace(P(t|T)\tilde{I}(t)^{T}\tilde{I}(t))
  \sigma_{o,i}^{2,(k+1)} = \frac{2\beta_{o} + \sum_{t\in \mathcal{T}_{i}}(y(t)-\tilde{I}(t)\bar{X}(t|T))_{\eta(i,t)}^{2} + (P(t|T))_{i,i} }{2\alpha_{o}+2+M_{i}}.
\end{equation}
Here $\mathcal{T}_{i}$ is the set of times where the price of asset $i$ is observed and $M_{i}$ is the total number of prices observed for asset $i$.
The subscript $\eta(i,t)$ is the row number of $\tilde{I}(t)$ such that $\tilde{I}(t)_{\eta(i,t),i}=1$.

For each conditional M-step $P(t,T)$, $P(t,t-1|T)$ and $\bar{X}(t|T)$ are evaluated with respect to $p(X(1:T)|Y,\Theta^{(k)})$.

To compute the conditional M-step for $J$ we denote
\begin{eqnarray*}
Q(j) &\doteq& \mathcal{G}([\Gamma^{(k+1)},D^{(k+1)},\sigSet,j,\lamSetInv^{(k)}],\Theta^{(k)}).
\end{eqnarray*}
Then up to a constant not depending on $j$
\begin{eqnarray*} \
Q(j) &=& - \frac{1}{2} \sum_{t=2}^{T}(\bar{X}(t|T)-j(t)-D^{(k+1)})^{T}( \Gamma^{(k+1)})^{-1}(\bar{X}(t|T)-j(t)-D^{(k+1)})\nonumber \\
      && + \sum_{t=2}^{T} (\bar{X}(t|T)-j(t)-D^{(k+1)})^{T}( \Gamma^{(k+1)})^{-1}\bar{X}(t-1|T) \nonumber \\
      &&+ \log(g(j(2:T)|\lamSet^{(k)})) + const_{1}\\
     &=&  - \frac{1}{2}\sum_{t=2}^{T} (\bar{X}(t|T)-j(t)-D^{(k+1)})^{T}( \Gamma^{(k+1)})^{-1}(\bar{X}(t|T)-j(t)-D^{(k+1)})\nonumber \\
      && + \sum_{t=2}^{T}(\bar{X}(t|T)-j(t)-D^{(k+1)})^{T}( \Gamma^{(k+1)})^{-1}\bar{X}(t-1|T) \nonumber \\
      &&- \sum_{t=2}^{T}\EllnW{j(t)}{\lambda(t)}{1} + const_{2}.
\end{eqnarray*}
where $\EllnW{j(t)}{\lambda(t)}{1} = \sum_{n=1}^{N}\lambda_{n}(t)|j_{n}(t)|$.
By rearranging terms we can express $Q(j)$ as a quadratic function of $j$
\begin{eqnarray*} \
Q(j) &=& - \frac{1}{2} \sum_{t=2}^{T}j(t)^{T}( \Gamma^{(k+1)})^{-1}j(t)\nonumber \\
      && + \sum_{t=2}^{T}(\bar{X}(t|T)-D^{(j+1)}-\bar{X}(t-1|T))^{T}(\Gamma^{(k+1)})^{-1}j(t) \nonumber \\
      &&- \sum_{t=2}^{T}\EllnW{j(t)}{\lambda(t)}{1} + const_{3}.
\end{eqnarray*}

Referring to equation \eqref{eqTheta3} we see that $J^{(k+1)}(t)$ is the solution of the following $\ell_{1}$ penalized quadratic program
\begin{align}\label{eqJumpProb}
J^{(k+1)}(t) =\arg\min_{j} \frac{1}{2}j^{T}( \Gamma^{(k+1)})^{-1} j - j^{T}(\Gamma^{(k+1)})^{-1}\Delta^{(k+1)} + \EllnW{j(t)}{\lambda(t)}{1}
\end{align}
where
\begin{equation} \label{eqJumpLaplaceEM}
  \Delta^{(k)}(t)=\bar{X}(t|T)-D^{(k)}-\bar{X}(t-1|T).
\end{equation}
This problem can be solved with a variety of fast algorithms such as ADMM \cite{ADMM} and FISTA \cite{FISTA}.

Now we determine $\lamSetInv$ which depends only on $q_{inv}(\lambda^{-1})$ and $p(j|\lambda)$.  Using conjugate prior relationships
we have $p(\lambda_{i}(t)^{-1}|j_{i}(t))$ is inverse gamma distributed with shape $\alpha_{\lambda}+1$ and scale $\beta_{\lambda}+ |j_{i}(t)|$.   Thus the conditional MAP estimate is
\begin{equation}\label{eqLamInvUp}
  \lambda_{i}(t)^{-1} = \frac{|J^{(k+1)}_{i}(t)|+\beta_{\lambda}}{\alpha_{\lambda}+2}.
\end{equation}
which implies that
\begin{equation}\label{eqLamUp}
  \lambda_{i}(t) = \frac{\alpha_{\lambda}+2}{|J^{(k+1)}_{i}(t)|+\beta_{\lambda}}.
\end{equation}

An outline of the KECM algorithm for Laplace jump models is given below.
\begin{algorithm}[H]
\caption{KECM Algorithm for estimation of $\Gamma$ under Laplace Prior}
\label{Alg:ECM}
\begin{algorithmic}
\STATE  \textbf{Initialize: }$\Theta^{(0)},k=0$
\WHILE{ not converged}
\STATE Compute $\bar{X}(t|T), P(t|T),P(t,t-1|T)$ using Kalman smoothing equations for $\Theta^{(k)}$  using equations \eqref{eq:ForwardEq}-\eqref{eq:BackwardRecursion}
\STATE Compute $D^{(k+1)},\Gamma^{(k+1)}$, and $\sigma_{o,i}^{2,(k+1)}$ using equations \eqref{driftEM},\eqref{GammaEM}, and \eqref{obsEM} respectively
\STATE Compute $J^{(k+1)}$ by solving \eqref{eqJumpLaplaceEM}
\STATE Compute $\lamSet$ by solving \eqref{eqLamUp}
\STATE $k=k+1$
\ENDWHILE
\end{algorithmic}
\end{algorithm}
Convergence results for this algorithm are given in Appendix \ref{Sec:ConvergeECM}.

\begin{remark}
\textnormal{
Since of value of $\lamSet$ changes with each iteration we see that we effectively reweight the $\ell_{1}$ penalty in \eqref{eqJumpProb} after each iteration.   Reweighting of the $\ell_{1}$ norm has been proposed in several papers and has been shown to have improved performance in compressive sensing problems versus a fixed set of weights \cite{CandesReweightedL1}.}
\end{remark}

\subsection{KECM approach for the Spike and Slab Jump Prior}
Now we present a KECM for the spike and slab jump prior.   As with the Laplace prior we treat $X$ as a latent variable.   Let us denote the unknown parameters as $\Phi$ where
\begin{eqnarray}
% \nonumber to remove numbering (before each equation)
\Phi_{1} &=& D \nonumber \\
\Phi_{2} &=& \Gamma \nonumber \\
\Phi_{3} &=& \sigma_{o,i}^{2}\nonumber \\
\Phi_{4} &=& Z(2:T),\tilde{J}(2:T) \nonumber \\
\Phi_{5} &=& \zeta \nonumber \\
\Phi_{6} &=& \left\{\sigma_{j,i}^{2}(t)\right\}_{i=1,\dots,N,t=1,\dots,T}.  \nonumber
\end{eqnarray}
Here we allow for distinct $\sigma_{j}^{2}$ values for each time and asset.

The E-step as well as the conditional M-steps for $\Phi_{1},\Phi_{2},\Phi_{3}$ are identical to the KECM algorithm for Laplace priors.   The differences for this section are in the conditional M-steps for $J$, $\zeta$ and $\sigma_{j}^{2}$.

First we address the conditional M-step for $J^{(k+1)}$.  Here we need to solve
\begin{eqnarray}\label{eqJumpProbSpike}
[J^{(k+1)}(t),Z^{(k+1)}(t),\tilde{J}^{(k+1)}(t)] &=&\arg\min_{j,z,\tilde{j}} \frac{1}{2}j^{T}( \Gamma^{(k+1)})^{-1} j\nonumber \\
&& \;\;\;\;\;  - j^{T}(\Gamma^{(k+1)})^{-1}\Delta^{(k+1)} - \sum_{i=1}^{N}\log(f(\tilde{j}_{i},z_{i})) \nonumber \\
&& \textnormal{ s.t.  }j_{i} =\tilde{j}_{i}z_{i}
\end{eqnarray}
where
\begin{equation}\label{eq:SpikeSlabPenalty}
  \log f(\tilde{j}_{i},z_{i}) =  \log\left(\zeta 1_{z_{i}=0} +(1-\zeta) 1_{z_{i} = 1}\right)+ \log\left(\frac{1}{\sqrt{2\pi\sigma_{j,i}^{2}}}\exp\left(-\frac{\tilde{j}_{i}^{2}}{2 \sigma_{j,i}^{2}}\right)\right).
\end{equation}
Here we dropped the notation for time dependence.  When restricted to $j_{i} =\tilde{j}_{i}z_{i}$, $-\log(\tilde{j}_{i},z_{i})$ induces a penalty on $j_{i}$.   which is a weighted sum of an $\ell_{0}$ and squared $\ell^{2}$ norm.   A plot of this penalty is shown in Figure \ref{fig:SpikeSlabPen}.
\begin{figure}[h]
    \centering
    \includegraphics[width=5in]{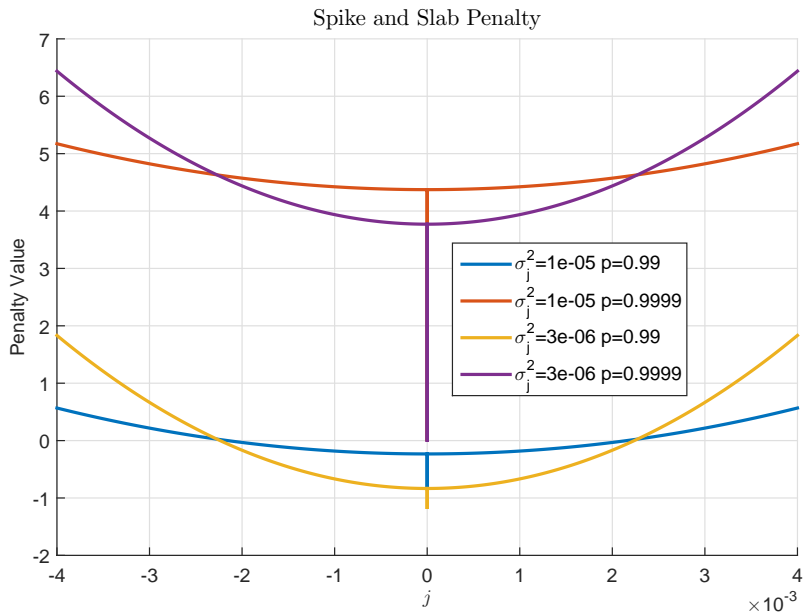}
    \caption{Spike and slab penalty function for various parameter values.   Here we see that the penalty is a weighted sum of $\ell_{0}$ and squared $\ell_{2}$ norms.}
    \label{fig:SpikeSlabPen}
\end{figure}

The term $-\log(\tilde{j}_{i},z_{i})$ is non-convex and complicates the conditional M-step \eqref{eqJumpProbSpike}.   Hence we seek an approximate maximization through coordinate descent.  Here we divide the problem into tractable 1-dimensional optimization problems with respect to one asset at a time.  The method and equations for implementing coordinate descent are derived in Appendix \ref{SecMCMCAppendixJump} and described below.   For ease of notation we drop the notation denoting dependence on $k$.

Let us define the following conditional mean and variance
\begin{equation}\label{eq:aSpike}
  a(i)=\Delta_{i}(t)+\Gamma_{i,-i}\Gamma_{-i,-i}^{-1}(j_{-i}(t)-\Delta_{-i}(t))
\end{equation}
and
\begin{equation}\label{eq:bSpike}
  b^{2}(i)=\Gamma_{i,i}-\Gamma_{i,-i}\Gamma_{-i,-i}^{-1}\Gamma_{-i,i}
\end{equation}
where the subscript $-i$ is to be interpreted as all indices except $i$.
Then the following rule determines the MAP optimal value of $z_{i}(t)$ conditioned on $j_{-i}(t)$
\begin{eqnarray}\label{eq:SpikeZupdate}
  z_{i|-i}(t) = \begin{cases} 0 \mbox { if } \frac{\zeta}{1-\zeta} \mathcal{N}(0,a(i),b^{2}(i)) > \mathcal{N}(0,a(i),b^{2}(i)+\sigma_{j,i}^{2}(t))  \\
  1 \mbox{ else }\end{cases}
\end{eqnarray}
where $\mathcal{N}(0,a(i),b^{2}(i))$ is the normal PDF with mean $a(i)$ and variance $b^{2}(i)$ evaluated at 0.   An optimal value of $\tilde{J}_{i|-i}(t)$ is then given as
\begin{eqnarray}\label{eq:SlabJupdate}
  \tilde{J}_{i|-i}(t)=\begin{cases} \frac{a}{1+b^{2}\sigma_{j,i}^{-2}(t)} \mbox { if } z_{i|-i}(t) \ne 0 \\
  0  \mbox{ else }
  \end{cases}.
\end{eqnarray}

The mapping defined by equations \eqref{eq:SpikeZupdate} and \eqref{eq:SlabJupdate} is a combination of a thresholding step followed by a shrinkage operation
\begin{eqnarray}\label{eq:SpikeUpdate}
  J_{i|-i}(t)&=&SpikeSlabShrink(a,b^{2}) \nonumber \\
   &\doteq& \begin{cases} 0 &\mbox { if } \frac{\zeta \mathcal{N}(0,a(i),b^{2}(i))}{(1-\zeta)\mathcal{N}(0,a(i),b^{2}(i)+\sigma_{j,i}^{2}(t))} >  1 \\
  \frac{a(i)}{1+b^{2}(i)\sigma_{j,i}^{-2}(t)} & \mbox{ else }
  \end{cases}.
\end{eqnarray}
This spike and slab shrinkage is illustrated in Figure \ref{fig:SpikeSlabShrink}.   As the plots indicate the shrinkage is discontinuous and large values are shrunk more than smaller values.
\begin{figure}[h]
    \centering
    \includegraphics[width=5in]{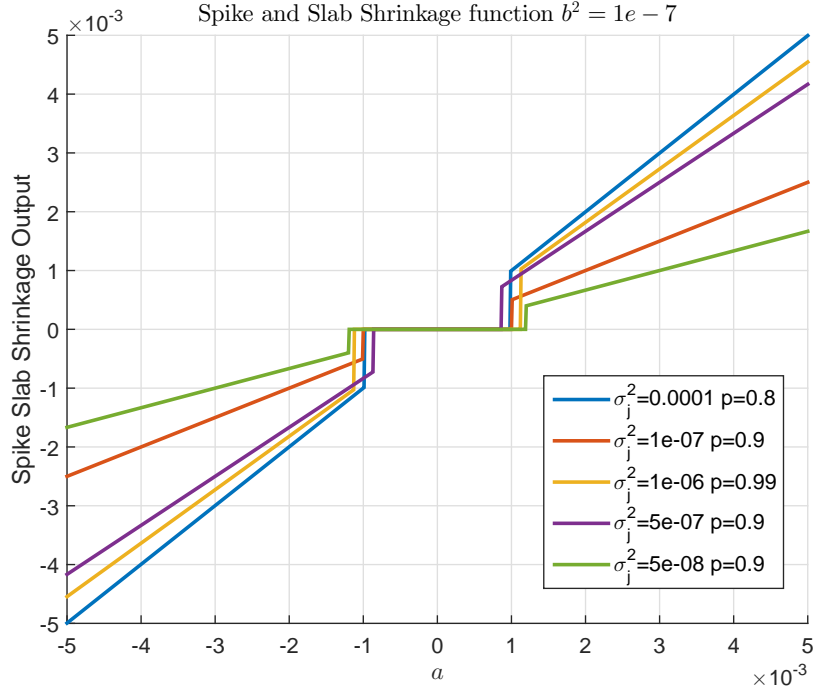}
    \caption{Spike and slab shrinkage function for various parameter values}
    \label{fig:SpikeSlabShrink}
\end{figure}

Equation \eqref{eq:SpikeUpdate} is cycled through all $i=1 \dots N$.  Multiple cycles may also be performed to obtain an improved estimate of $J$.  A summary of the algorithm for the conditional M-step for $J$ is given below in Algorithm \ref{Alg:spikeJ}.
\begin{algorithm}[H]
\caption{Coordinate Descent for Determination of $Z^{(k+1)}(t)$,$\tilde{J}^{(k+1)}(t)$, and $J^{(k+1)}(t)$}
\label{Alg:spikeJ}
\begin{algorithmic}
\STATE  \textbf{Initialize: }Set $J^{(k+1)}(t)=J^{(k)}(t)$, $it$=0, $L>0$
\WHILE{ $it \le L$}
    \STATE $it=it+1$
    \STATE $i=0$
    \WHILE{$i < N$}
        \STATE $i=i+1$
        \STATE Compute $Z_{i}^{(k+1)}(t)$ using equations \eqref{eq:aSpike}, \eqref{eq:bSpike}, and \eqref{eq:SpikeZupdate}
        \STATE Compute $\tilde{J}_{i}^{(k+1)}(t)$ using equations \eqref{eq:aSpike}, \eqref{eq:bSpike}, and \eqref{eq:SlabJupdate}
        \STATE Set $J^{(k+1)}_{i}(t) = Z_{i}^{(k+1)}(t)\tilde{J}_{i}^{(k+1)}(t)$
    \ENDWHILE
\ENDWHILE
\STATE return $J^{(k+1)}(t)$
\end{algorithmic}
\end{algorithm}
Although this method is not guaranteed to solve \eqref{eqJumpProbSpike} it will not increase the value of the objective function compared with $J^{k}(t)$.

Once $J^{(k+1)}$ is obtained, values for $\zeta^{(k+1)}$ and $\sigma_{j}^{2,(k+1)}$ are easily computed through conjugate prior relationships.  First let $N_{Z}$ be number of zero values in $J(2:T)^{(k+1)}$. Then by conjugate prior relationships the conditional M-steps for $\zeta$ and $\sigma_{j}^{2}$ are
\begin{equation}\label{eq:UpdateZeta}
  \zeta^{(k+1)} = \frac{\alpha_{\zeta}+N_{Z}}{N(T-1)+\beta_{\zeta}+\alpha_{\zeta}}
\end{equation}
and
\begin{equation}\label{eq:UpdateSigJ}
  \sigma_{j,i}^{2,(k+1)}(t) = \frac{\beta_{j}+0.5(J_{i}(t))^{2}}{\alpha_{j}+1+0.5(Z_{i}(t))}.
\end{equation}

The KECM algorithm for spike and slab models is summarized in Algorithm \ref{Alg:ECMspike}.
\begin{algorithm}[H]
\caption{KECM Algorithm for estimation of $\Gamma$ under Spike and Slab Prior}
\label{Alg:ECMspike}
\begin{algorithmic}
\STATE  \textbf{Initialize: }$\Phi^{(0)},k=0$
\WHILE{ not converged}
\STATE Compute $\bar{X}(t|T), P(t|T),P(t,t-1|T)$ using Kalman smoothing equations for $\Theta^{(k)}$  using equations \eqref{eq:ForwardEq}-\eqref{eq:BackwardRecursion}
\STATE Compute $D^{(k+1)},\Gamma^{(k+1)}$, and $\sigma_{o,i}^{2,(k+1)}$ using equations \eqref{driftEM}, \eqref{GammaEM}, and \eqref{obsEM} respectively
\STATE For all $t$, compute $\tilde{J}^{(k+1)}(t)$, $Z^{(k+1)}(t)$ using Algorithm \ref{Alg:spikeJ}
\STATE Set $J^{(k+1)}_{i}(t) = Z_{i}^{(k+1)}(t)\tilde{J}_{i}^{(k+1)}(t)$
\STATE Compute $\zeta^{(k+1)}$ using equation \eqref{eq:UpdateZeta}
\STATE Compute $\sigma_{j,i}^{2,(k+1)}(t)$ using equation \eqref{eq:UpdateSigJ} for all $i,t$
\STATE $k=k+1$
\ENDWHILE
\end{algorithmic}
\end{algorithm}

Note that although $J(t)$ is only approximately maximized in each conditional M-step this is still an ECM algorithm.    To see this we can simply redefine $\Phi$ as
\begin{equation*}
  \left[D,\Gamma,\sigma_{o}^{2},J_{1}(2),\dots,J_{N}(2),\dots,J_{1}(T),\dots,J_{N}(T),\zeta,\sigma_{j}^{2}\right].
\end{equation*}
Then the above algorithm is an ECM algorithm for the redefined parameter vector.   The convergence of Algorithm \ref{Alg:ECMspike} is similar to the proof of the convergence of Algorithm \ref{Alg:ECM} in Appendix \ref{Sec:ConvergeECM}.

\begin{remark}
\textnormal{ A comparison of the spike and slab shrinkage function with the shrinkage function of the $b^{2}-equivalent$ Laplace prior is shown in Figure \ref{fig:SpikeSlabShrinkCompare}.   The Laplace shrinkage function (with parameter $\lambda$)  is defined as
\begin{eqnarray*}
  LaplaceShrink(a,b^{2}) &\doteq& \begin{cases} a-\lambda b^{2} &\mbox { if } a>\lambda b^{2}  \\
  a+\lambda b^{2} & \mbox{ if } a<-\lambda b^{2} \\
  0 &\mbox{ else }
  \end{cases}.
\end{eqnarray*}
The graphs illustrate advantages and disadvantages of the Laplace prior.   One notable disadvantage is that for large $\sigma_{j}^{2}$ the Laplace prior has a large bias relative to spike and slab priors.   However for small $\sigma_{j}^{2}$ and large values of $a$ we see that the Laplace prior is less biased than the spike and slab.   This can be attributed to the quadratic penalty induced by the spike and slab prior which penalizes large jumps more heavily than the Laplace prior.}
\end{remark}
\begin{figure}[h]
    \centering
    \includegraphics[width=5in]{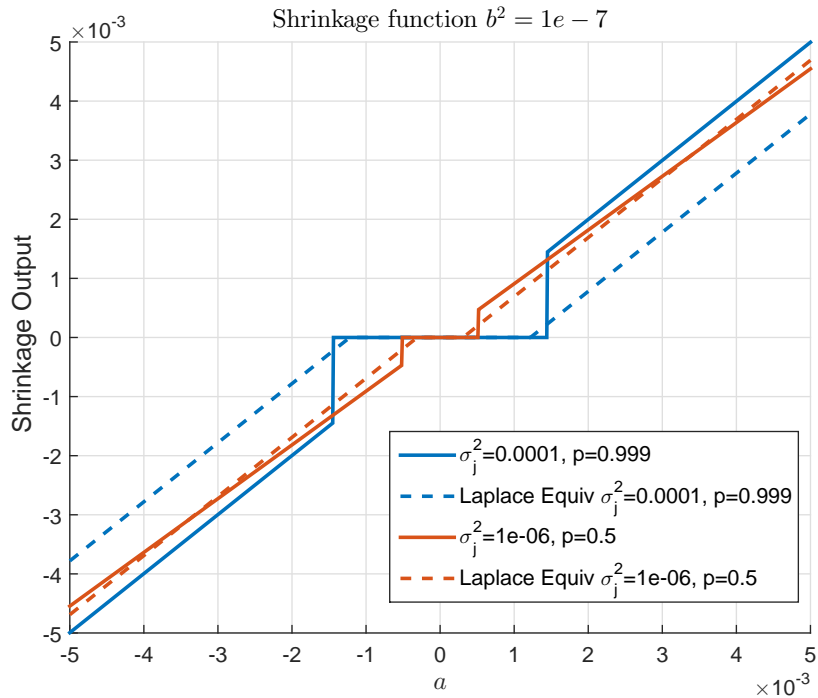}
    \caption{Shrinkage Functions of the spike and slab and the corresponding $b^{2}-equivalent$ Laplace prior}
    \label{fig:SpikeSlabShrinkCompare}
\end{figure}
\begin{remark}
The use of Laplace priors and $\ell_{1}$ penalties has been applied in context of robust Kalman filtering and smoothing in \cite{BoydRobustKalman,AravkinL1Laplace}.   Here the authors considered the problem of non-gaussian heavy tailed observation noise rather than process noise.
\end{remark}
%To facilitate derivation of the conditional M-step recall from equation \ref{eq:DefJump} $J=Z\tilde{J}$.   Thus we factor $J^{(k+1)}$ into two terms %$J^{(k+1)}(t)=Z^{(k+1)}(t)\tilde{J^{(k+1)}}(t)$.   With this factorization we have the following optimization problem

\section{Bayesian Approach using MCMC}\label{secMCMC}
In this section we consider a fully Bayesian approach where we estimate the posterior distribution of $\Gamma$.   The advantages of the fully Bayesian approach to this problem are
\begin{enumerate}
  \item Uncertainty in nuisance parameters such as $J$ and $\sigma_{o}^{2}$ are averaged out rather than relying on MAP point estimates
  \item Estimate of the posterior distribution of $\Gamma$ is obtained which provides more information than a posterior mode
  \item Estimates of uncertainty in covariance estimate can be obtained.
  %\item Since we no longer rely on MAP estimation we can use the spike and slab model for the jump process rather than the Laplace relaxation
\end{enumerate}

To describe the Bayesian approach to estimation of $\Gamma$, let $\Phi$ represent the unknown parameters
\begin{eqnarray}
% \nonumber to remove numbering (before each equation)
\Phi_{0} &=& Y_{miss} \nonumber \\
\Phi_{1} &=& X(1:T) \nonumber \\
\Phi_{2} &=& J(2:T) \nonumber \\
\Phi_{3} &=& D \nonumber \\
\Phi_{4} &=& \sigSet \nonumber \\
\Phi_{5} &=& \zeta \nonumber \\
\Phi_{6} &=& \left\{\sigma_{j,i}^{2}(t)\right\}_{i=1,\dots,N,t=1,\dots,T}
\end{eqnarray}
where $Y_{miss}$ are the unobserved prices.   Unlike the KECM approaches in the previous section we sample the missing observations $Y_{miss}$.   One advantage of sampling $Y_{miss}$ is that the covariance of
$X(t)|X(t-1),X(t+1),Y(t),Y_{miss}(t)$ is the same for all $2 \le t \le T-1$, where as the covariance of $X(t)|X(t-1),X(t+1),Y(t)$ depends on $t$.   This simplification allows for faster numerical simulation in the Gibbs sampler.

In the Bayesian approach given the data $Y(t)$ we wish to compute the posterior distribution of $\Gamma$
\begin{equation}\label{eqPosteriorMCMC}
   p(\gamma|y) = \frac{ \int p(y|\phi,\gamma)p(\phi,\gamma) d\phi}{\int\int p(y|\phi,\gamma')p(\phi,\gamma')d\phi d\gamma'}.
\end{equation}
Once the posterior is obtained the posterior mean of $\Gamma$ can be obtained via
\begin{equation}\label{eqPosteriorMeanGamma}
   \E_{|y}\Gamma = \int \gamma p(\gamma|y) d\gamma.
\end{equation}
The posterior mean which is optimal in a minimum mean squared error (MMSE) sense can be used as an estimate of $\Gamma$.

Evaluating the integrals in \eqref{eqPosteriorMCMC} and \eqref{eqPosteriorMeanGamma} are intractable, however we can obtain samples from the posterior distribution using a Markov Chain Monte Carlo (MCMC) technique such as Gibbs sampling.   These samples can then be used to obtain an estimate of $\E_{|y}\Gamma$.  A Gibbs sampling approach for estimating $\E_{|y}\Gamma$ is described in the next section.

\subsection{Gibbs Sampling approach}\label{SecGibbs}
Gibbs sampling \cite{ExplainingGibbs} is an MCMC approach for generating samples from a multivariable distributions such as $p(\phi,\gamma|y)$.
In this application Gibbs sampling may be implemented as follows to generate samples of $\Gamma^{(1)}, \dots, \Gamma^{(M_{G})}$ from $p(\gamma|y)$.
\begin{enumerate}
  \item Initialize the first samples,$\Phi^{(0)}$,$\Gamma^{(0)}$
  \item for $k = 1$ to $M_{G}$
  \begin{itemize}
    \item Sample $Y_{miss}^{(k)}$ from the conditional distribution $p(Y_{miss}|X^{(k-1)},\sigma_{o}^{2})$
    \item for $t=1$ to $T$
    \begin{itemize}
        \item Sample $X(t)^{(k)}$ from
        \begin{equation*}
          p(x(t)|y,Y_{miss}^{(k)},\Phi^{(k-1)}_{2:6},X^{(k)}(1:t-1),X^{(k-1)}(t+1:T),\Gamma^{(k-1)})
        \end{equation*}

    \end{itemize}
    \item for $l=1$ to $L$
    \begin{itemize}
        %\item Initialize $J^{(k)}(t)=J^{(k-1)}(t)$
        \item for $t=1$ to $T$,$n=1$ to $N$
        \begin{itemize}
        \item Sample $J_{n}(t)^{(k-1+l/L)}$ from
        \end{itemize}
         \begin{equation*}
           p(j_{n}(t)|y,\Phi^{(k)}_{0,1},J_{1:n-1}^{(k-1+l/L)}(t),J_{n+1:N}^{(k-1+ (l-1)/L)}(t),\Gamma^{(k-1)})
         \end{equation*}
    \end{itemize}
    \item Sample $D$ from  $p(d|y,\Phi^{(k)}_{0:2},\Phi^{(k-1)}_{4:6},\Gamma^{(k-1)})$
    \item Sample $\Gamma^{(k)}$ from $p(\gamma|y,\Phi^{(k)}_{0:3},\Phi^{(k-1)}_{4:6})$
    \item Sample $\sigma_{o}^{2,(k)}$ from  $p(\sigma_{o}^{2}|y,\Phi^{(k)}_{0:3},\Phi^{(k-1)}_{5:6},\Gamma^{(k)})$
    \item Sample $\zeta^{(k)}$ from  $p(\zeta|y,\Phi^{(k)}_{0:4},\Phi^{(k-1)}_{6},\Gamma^{(k)})$
    \item Sample $\sigma_{j,i}^{2,(k)}(t)$ from  $p(\sigma_{j,i}^{2}(t)|y,\Phi^{(k)}_{0:5},\Gamma^{(k)})$  for all $i,t$
  \end{itemize}
\end{enumerate}
where $\Phi_{i:j}$ refers to $[\Phi_{i},\dots, \Phi_{j}]$ and where
\begin{equation*}
  \Phi_{-n}(t)=[\Phi_{0}(t),\dots, \Phi_{n-1}(t),\Phi_{n+1}(t),\dots,\Phi_{6}(t)]
\end{equation*}
Each of these steps draws from conditional distributions can be implemented easily as shown in Appendix \ref{SecMCMCAppendix}.

It can be shown using well known results on Markov chains that the samples produced by the above Gibbs sampler form a Markov chain \cite{MonteCarloCasella} with a limiting stationary distribution $p(\phi,\gamma|y)$.
\subsection{Estimation of $\Gamma$}
The samples of $\Gamma^{(k)}$ are used as an estimate of the posterior distribution of $\Gamma$.   Using the estimated posterior distribution the posterior mean of $\Gamma$ is the sample average of $\Gamma^{(k)}$ ( where we discarded earlier samples to allow for the samples to converge )
\begin{equation*}
  \hat{\Gamma} = \frac{1}{M_{G}-k+1}\sum_{m=k}^{M_{G}} \Gamma^{(m)}.
\end{equation*}

Another technique to estimate $\Gamma$ is Rao-Blackwellization which reduces the variance in the covariance estimate \cite{CovarLiu}.
Here we take the sample of average of the conditional means to arrive at an estimate of the posterior mean of $\Gamma$
\begin{equation*}
  \hat{\Gamma} = \frac{1}{M_{G}-k+1}\sum_{m=k}^{M_{G}} \E(\Gamma|\Phi^{(k)}).
\end{equation*}
The numerical experiments presented in the next section use Rao-Blackwellization for the posterior mean estimation.

\section{Numerical Results}\label{sec:ECM_numerical}
\definecolor{LightCyan}{rgb}{1,0.97,0.86}
In this section we evaluate the performance of the following algorithms
 \begin{enumerate}
   \item KEM \cite{KalmanEMStocks}
   \item KECM Laplace(section \ref{Sec:ECM})
   \item KECM Spike and Slab (section \ref{Sec:ECM})
   \item MCMC approach (section \ref{secMCMC})
   \item Pairwise refresh with TSCV \cite{VastVolMatrix,ZhangEpps}
   \item Pairwise refresh with TSCV and jump correction \cite{BoudtJump}
 \end{enumerate}
for determining a covariance matrix from high frequency data.
The performance is evaluated using a Monte Carlo approach with simulated high frequency return data.
\subsection{Performance Assessment Methodology}
We track two performance measures for the covariance estimate, $\hat{\Gamma}$, in this study.  For the first performance measure we compute the minimum variance portfolio
\begin{align*}
\tilde{w} =\arg\min_{w} w^{T}\hat{\Gamma} w \\
\textnormal{s.t.} \sum_{i} w_{i} =1.
\end{align*}
The variance of this portfolio's return is then computed as a figure of merit.  The variance of the portfolio return is given below
\begin{equation*}
  \tilde{w}^{T}\Gamma \tilde{w}.
\end{equation*}

For the second performance measure we compute the relative Frobenius norm of the error between the true and estimated covariance
\begin{equation*}
  \frac{ \sqrt{\sum_{i,j}|\Gamma_{i,j}-\hat{\Gamma}_{i,j}|^{2}}}{\sqrt{\sum_{i,j}|\Gamma_{i,j}|^{2}}}.
\end{equation*}

\subsection{Algorithm Initialization and other considerations}
In each study we initialize the algorithms in the same way.   The hyper-parameters for the prior distribution are listed in Table \ref{TableNumSim_Hyper}.   For the KEM and KECM algorithms the initial covariance estimate is computed using the time refresh method in \cite{RefreshTime}.   The initialization of drift and jump estimate of each algorithm is set to zero.   For the MCMC algorithm we take the output of the KECM spike and slab algorithm as the first sample.

In the KECM algorithms we employ one additional initialization step to avoid being trapped in an over-smoothed local solution.  This step involves using a forward Kalman filter rather than a smoother to approximate the posterior distribution of $X(t)$ in the first 10 iteration of the KECM algorithms.   After 10 iterations we revert to the approaches described in Section \ref{Sec:ECM} which use the Kalman smoother.

The stability of the covariance estimate forms the basis for a stopping criterion in the KECM algorithms.  The KECM algorithms are terminated at iteration $n$ when the relative difference between the current and previous covariance estimate is less than 0.001
\begin{equation*}
  \frac{ \sqrt{\sum_{i,j}|\hat{\Gamma}_{i,j}^{(n)}-\hat{\Gamma}_{i,j}^{(n-1)}|^{2}}}{\sqrt{\sum_{i,j}|\hat{\Gamma}_{i,j}^{(n-1)}|^{2}}} < 0.001.
\end{equation*}
For the MCMC algorithm we generate 10000 samples and discard the first 2000 samples to allow for convergence of the Markov chain.

Since jumps cannot be predicted an ambiguity occurs if there is no observation of the price at the time the jump occurs.   Thus to prevent ambiguity we assume jumps in the $i^{th}$ asset price can only occur if an observation of the $i^{th}$ price is made.   We believe that this is a mild assumption given that in many markets jumps in the efficient price will be traded upon almost immediately.  This assumption is built into the KECM and MCMC approaches by setting $\lambda=\infty$ and $\zeta=1$ when an observation does not occur.

\subsection{Simulated Data Jump Model}
For the data study we simulated 30 minutes of data from 20 assets according to equations \eqref{eqX} and \eqref{eqPriceObserved} at 1 second intervals.  Here 50 data sets were generated to test our algorithms.
Taking motivation from factor models for U.S. stock returns we set our covariance $\Gamma$ according to the following 5 factor model
\begin{equation*}
\Gamma = \sum_{i=1}^{5} \beta_{v_{i}}v_{i}v_{i}^{T}  + \epsilon I.
\end{equation*}
Here we compute a new covariance for each Monte Carlo data set.
We draw $v_{1}$ from a multivariate normal distribution with mean $\frac{1}{\sqrt{2}}$ and covariance $0.5I$.   For $i>1$, we draw $v_{i}$ from a multivariate normal distribution with zero mean and covariance $I$.  The factor variance $\beta_{v_{i}}$ is modeled as gamma distributed with shape 2 and mean $\frac{0.7 * 0.02^{2}}{23400}$ for $i=1$ and mean $\frac{0.3/4 * 0.02^{2}}{23400}$ for $i \ne 1$.  The $\epsilon$ term is defined to be $\frac{0.02^{2}}{23400*100}$.   With these settings each simulated asset will on average have a daily return volatility of approximately 2 percent.

For the $D$ parameter we use a random number generator for each data set.   The value for $D$ was drawn from a multi-variate normal distribution with mean 0 and covariance $\left(\frac{0.01}{23400}\right)^{2}I$.   The observation noise variance of each asset was set to a random number drawn from a gamma distribution with shape 2 and mean $0.0002^{2}$.  For a stock price of $\$25$ this corresponds to a mean noise standard deviation of about $\$0.005$.  The jump parameters $\zeta$ and $\sigma_{j}^{2}$ were varied parametrical over several values.

Both the KECM and MCMC algorithms require hyperparameters to be specified for the prior distributions.  For these experiments we choose hyperparameters which would result in diffuse priors in order to minimize bias.  For the hyperparameters of the Laplace prior in the KECM algorithm we used the technique described in Section \ref{Sec:ProcedureQlam}.   A listing of all the hyperparameters used in the algorithms are shown in Table \ref{TableNumSim_Hyper}.

\definecolor{GrayTable}{gray}{0.85}

\newcolumntype{a}{>{\columncolor{GrayTable}}c}
\newcolumntype{b}{>{\columncolor{white}}c}

\begin{table}
\centering
\begin{tabular}{|l | b | b | b |}
\hline
%\rowcolor{LightCyan}

  & {Value} & {Comment} \\
\hline
$\alpha_{\zeta}$ & $10 \times 0.995$ &    \\ \hline
$\beta_{\zeta}$  &$10 - \alpha_{\zeta}$ &  prior mean of $\zeta$ is $0.995$  \\ \hline
$\alpha_{j}$ & 10 & \\ \hline
$\beta_{j}$ &$0.01^{2}(\alpha_{j}+1)$ &  prior mode of $\sigma_{j}^{2}$ is 1e-4 \\ \hline
$\alpha_{o}$ &5 & \\ \hline
$\beta_{o}$ & $(\alpha_{o}+1) \times 0.0001^{2}$ & prior mode of $\sigma_{o}^{2}$ is 1e-8  \\ \hline
$\eta$ &$N+5$ &  \\ \hline
$W_{o}$ &$\frac{0.02^{2}(\eta+N+1)}{23400}I$ & Corresponds to 0.02\% daily volatility \\ \hline
$\alpha_{\lambda}$ &5.6 & Obtained using method in Section \ref{Sec:Laplace} \\ \hline
$\beta_{\lambda}$ &5e-04 &Obtained using method in Section \ref{Sec:Laplace} \\ \hline
%$a$ &280 & Obtained using method in Section \ref{Sec:Laplace} \\ \hline
\end{tabular}
\caption{Parameters used in KEM, KECM and MCMC algorithms}
\label{TableNumSim_Hyper}
\end{table} 
The probability that any given price is observed is set to be commensurate with the price innovation.   This is consistent with empirical observations that trading volume can be positively correlated with volatility \cite{Karpoff}.   To model this association the probability that the $m^{th}$ asset price will be observed at time $t$ is simulated as
\begin{equation*}
  p_{obs,m}(t)=\frac{|X_{m}(t)-X_{m-1}(t)-D_{m}|}{|X_{m}(t)-X_{m-1}(t)-D_{m}|+\nu}
\end{equation*}
where
\begin{equation*}
  \nu = \frac{\sqrt{2\Gamma_{m,m}}}{\pi}\left(\frac{1}{p_{Obs}}-1\right).
\end{equation*}
This choice of $\nu$ ensures that when the innovation achieves its mean absolute value ,$\sqrt{\frac{2\Gamma_{m,m}}{\pi}}$, the probability of an observation will be $p_{Obs}$. We set $p_{Obs}=0.3$ in our numerical experiments.

The performance results for different values of the jump parameters are shown Tables \ref{TableVar_J} and \ref{TableCovar_J}.   For the majority of cases we see that the KECM approaches outperform the other methods when jumps are present.   In Figure \ref{fig:KEMjumpMiss} we show the Kalman estimate of the true price for various algorithms. The figure highlights the disadvantage of the KEM algorithm in the presence of jumps, namely that it over smoothes prices near jumps.  % Table \ref{TableCovar_Jsmall} shows the performance under the jump model when the observation noise is $0.0001^{2}$.

\definecolor{GrayTable}{gray}{0.85}

\definecolor{Pass}{rgb}{0,1,0}
\definecolor{Fail}{rgb}{1,0,0}

\newcolumntype{a}{>{\columncolor{GrayTable}}c}
\newcolumntype{b}{>{\columncolor{white}}c}

\begin{table}
%\footnotesize
\centering
\begin{tabular}{|l |l |b | b | b | b|b|b|}
\hline
%\rowcolor{LightCyan}
&  &  &KECM  &KECM  & & Pairwise  & Pairwise  \\
%\rowcolor{LightCyan}
$\zeta $      & $\sigma_j^{2}$  & KEM     & Laplace & Spike     & MCMC &Refresh & Refresh  \\
%\rowcolor{LightCyan}
& & & & \& Slab & & &(jump) \\
\hline
1 & N/A&\cellcolor{Pass}1.2e-10 & 1.3e-10 & 1.3e-10 & 1.3e-10 & 1.8e-10 & 2e-10 \\ \hline
0.9999 & 6.25e-06 & 1.5e-10 & 1.4e-10&\cellcolor{Pass}1.4e-10 & 1.5e-10 & 1.8e-10 & 2e-10 \\ \hline
0.9999 & 0.0001 & 1.6e-10 & 1.4e-10&\cellcolor{Pass}1.4e-10 & 1.5e-10 & 2.6e-10 & 2.2e-10 \\ \hline
0.9995 & 6.25e-06 & 1.6e-10 & 1.3e-10&\cellcolor{Pass}1.3e-10 & 1.3e-10 & 2.4e-10 & 2.3e-10 \\ \hline
0.9995 & 0.0001 & 3e-10 & 1.3e-10&\cellcolor{Pass}1.2e-10 & 1.3e-10 & 7.9e-10 & 6e-10 \\ \hline
0.999 & 6.25e-06 & 2.4e-10 & 1.6e-10&\cellcolor{Pass}1.6e-10 & 1.7e-10 & 4.7e-10 & 4.5e-10 \\ \hline
0.999 & 2.5e-05 & 4.5e-10 & 1.7e-10&\cellcolor{Pass}1.7e-10 & 1.8e-10 & 9.8e-10 & 6.9e-10 \\ \hline
0.999 & 0.0001 & 8.2e-10&\cellcolor{Pass}1.6e-10 & 1.6e-10 & 1.7e-10 & 1.7e-09 & 1.1e-09 \\ \hline
\end{tabular}
\caption{Portfolio variance for jump model, best performance highlighted in green.}
\label{TableVar_J}
\end{table}

\begin{table}
%\footnotesize
\centering
\begin{tabular}{|l |l |b | b | b | b|b|b|}
\hline
%\rowcolor{LightCyan}
&  &  &KECM   &KECM & & Pairwise  & Pairwise  \\
%\rowcolor{LightCyan}
$\zeta $      & $\sigma_j^{2}$  & KEM      & Laplace & Spike     & MCMC &Refresh  & Refresh  \\
%\rowcolor{LightCyan}
& & & & \& Slab & & &(jump) \\
\hline
1 & N/A&\cellcolor{Pass}0.2 & 0.2 & 0.2 & 0.22 & 0.48 & 0.49 \\ \hline
0.9999 & 6.25e-06 & 0.22&\cellcolor{Pass}0.22 & 0.22 & 0.24 & 0.47 & 0.49 \\ \hline
0.9999 & 0.0001 & 0.73&\cellcolor{Pass}0.21 & 0.21 & 0.22 & 0.89 & 0.69 \\ \hline
0.9995 & 6.25e-06 & 0.29&\cellcolor{Pass}0.21 & 0.21 & 0.22 & 0.55 & 0.54 \\ \hline
0.9995 & 0.0001 & \cellcolor{Fail}3.5 & 0.18&\cellcolor{Pass}0.18 & 0.2 & \cellcolor{Fail}2.9 & \cellcolor{Fail}2.1 \\ \hline
0.999 & 6.25e-06 & 0.36 & 0.21&\cellcolor{Pass}0.21 & 0.22 & 0.67 & 0.65 \\ \hline
0.999 & 2.5e-05 & \cellcolor{Fail}1.1&\cellcolor{Pass}0.21 & 0.21 & 0.22 & \cellcolor{Fail}1.5 & \cellcolor{Fail}1.4 \\ \hline
0.999 & 0.0001 & \cellcolor{Fail}4.8 & 0.2&\cellcolor{Pass}0.2 & 0.21 & \cellcolor{Fail}4.6 & \cellcolor{Fail}3.6 \\ \hline
\end{tabular}
\caption{Average covariance error for jump model, best performance highlighted in green.  Large errors highlighted in red.}
\label{TableCovar_J}
\end{table}

\begin{figure}[h]
    \centering
    \includegraphics[width=4.7in]{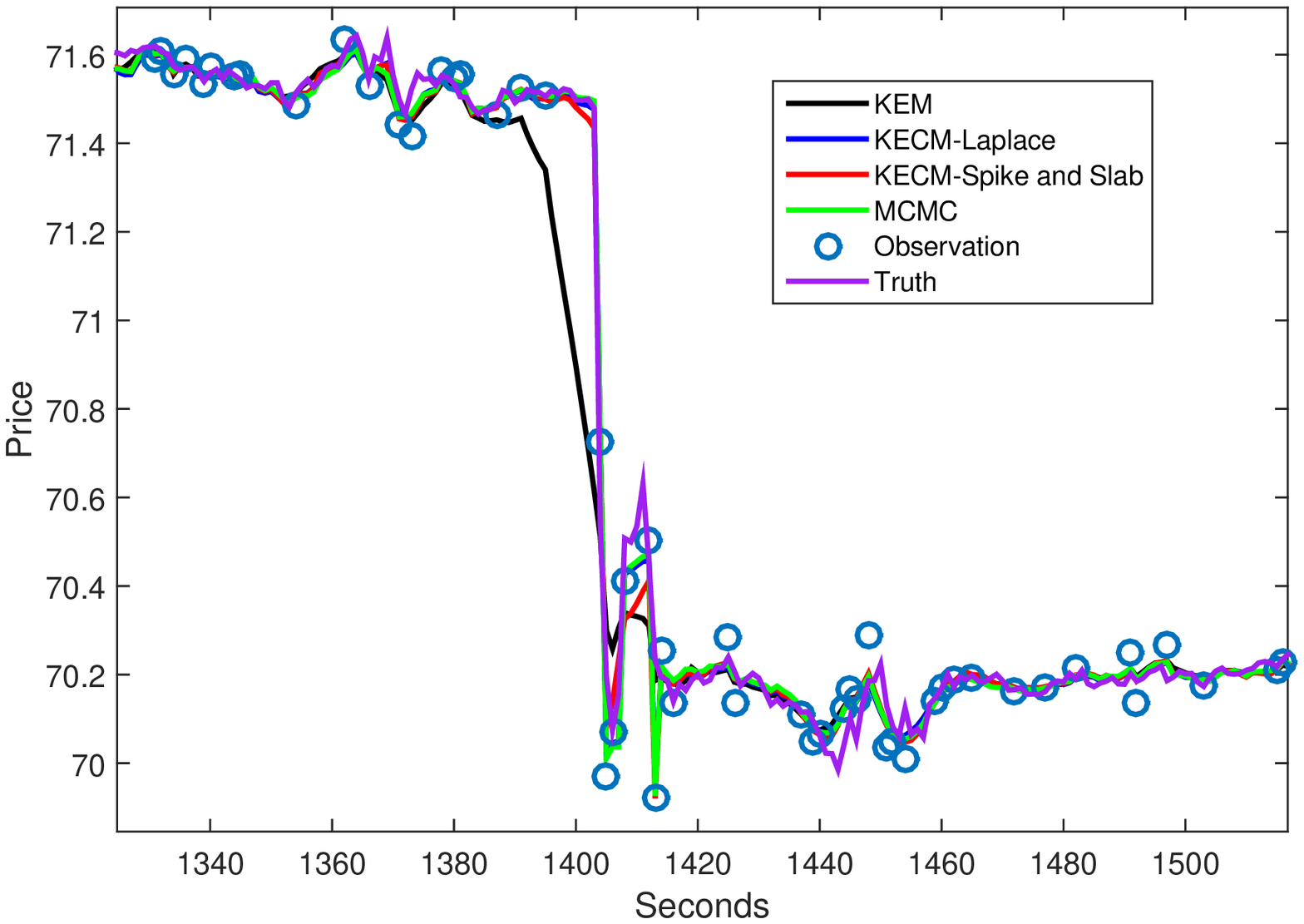}
    \caption{Price estimate example from the KEM, KECM, and Gibbs sampling.   This is an example of the KEM algorithm over-smoothing near a small jump in price}
    \label{fig:KEMjumpMiss}
\end{figure}
\subsection{Simulated Data from GARCH(1,1)-jump model}\label{Sec:GarchJump}
In addition to the jump diffusion model we also evaluate the algorithms against a multivariate GARCH(1,1)-jump pricing model \cite{ChanGARCHJump,MaheuGARCHJump,Bollerslev1990}, where the effect of jumps persists in the price volatility.   Using the GARCH(1,1)-jump model the log-price price data is generated as
\begin{equation*}
  X_{i}(t)=X_{i}(t-1) + \sqrt{h_{i}}V_{i}(t) +J_{i}(t)Z_{i}(t) +D
\end{equation*}
\begin{equation*}
  h_{i}(t+1)=b_{i}h_{i}(t)+a_{i}(X_{i}(t)-X_{i}(t-1)-D)^{2} + c_{i}
\end{equation*}
\begin{equation*}
  h_{i}(0)=\Gamma_{i,i}
\end{equation*}
where $a_{i},b_{i},c_{i}$ are non-negative with $b_{i}+a_{i} <1$ and $c_{i}= \Gamma_{i,i}(1-a_{i}-b_{i})$.  Here $V(t)$ is modeled as multivariate normal with
\begin{itemize}
  \item $V_{i}(t) \sim \mathcal{N}(0,1)$
  \item $\E V_{i}(t)V_{j}(t)=\frac{\Gamma_{i,j}}{\sqrt{\Gamma_{i,i}\Gamma_{j,j}}}$
  \item $\E V_{i}(t_{1})V_{j}(t_{2})=0$ for $t_{1} \ne t_{2}$.
\end{itemize}
The value of $c$ ensures that in the absence of jumps, the long term average volatility for the $i^{th}$ asset will be $\sqrt{\Gamma_{i,i}}$.  We also see that the correlation coefficient between any two assets is constant \cite{Bollerslev1990}.

In these experiments $a_{i}=0.3$ and $b_{i}=0.5$.   This allows for volatility clustering which has been observed in many empirical stock return data.    All other parameters such as the covariance matrix are identical to the previous experiment.

The results for the GARCH(1,1)-jump model are shown in Tables \ref{TableVar_GJ} - \ref{TableCovar_GJ}.   From these tables we see that the KECM and MCMC algorithms are robust to the volatility clustering exhibited in GARCH models.   %Table \ref{TableCovar_GJsmall} shows the performance under the GARCH(1,1)-jump model when the observation noise is smaller.

\definecolor{GrayTable}{gray}{0.85}

\definecolor{Pass}{rgb}{0,1,0}
\definecolor{Fail}{rgb}{1,0,0}

\newcolumntype{a}{>{\columncolor{GrayTable}}c}
\newcolumntype{b}{>{\columncolor{white}}c}

\begin{table}
%\footnotesize
\centering
\begin{tabular}{|l |l |b | b | b | b|b|b|}
\hline
%\rowcolor{LightCyan}
&  &  &KECM  &KECM  & & Pairwise  & Pairwise  \\
%\rowcolor{LightCyan}
$\zeta $      & $\sigma_j^{2}$  & KEM     & Laplace & Spike     & MCMC &Refresh & Refresh  \\
%\rowcolor{LightCyan}
& & & & \& Slab & & &(jump) \\
\hline
1 & N/A&\cellcolor{Pass}1.3e-10 & 1.3e-10 & 1.3e-10 & 1.4e-10 & 2.5e-10 & 2.6e-10 \\ \hline
0.9999 & 6.25e-06 & 1.6e-10 & 1.6e-10&\cellcolor{Pass}1.5e-10 & 1.6e-10 & 2.4e-10 & 2.5e-10 \\ \hline
0.9999 & 0.0001 & 1.6e-10&\cellcolor{Pass}1.3e-10 & 1.3e-10 & 1.3e-10 & 4.4e-10 & 3.1e-10 \\ \hline
0.9995 & 6.25e-06 & 2e-10&\cellcolor{Pass}1.5e-10 & 1.5e-10 & 1.5e-10 & 4.4e-10 & 3.9e-10 \\ \hline
0.9995 & 0.0001 & 3.7e-10&\cellcolor{Pass}1.3e-10 & 1.4e-10 & 1.3e-10 & 1e-09 & 4.5e-10 \\ \hline
0.999 & 6.25e-06 & 2.6e-10&\cellcolor{Pass}1.4e-10 & 1.4e-10 & 1.4e-10 & 5.8e-10 & 4.7e-10 \\ \hline
0.999 & 2.5e-05 & 5.5e-10&\cellcolor{Pass}1.5e-10 & 1.7e-10 & 1.6e-10 & 1.4e-09 & 8.1e-10 \\ \hline
0.999 & 0.0001 & 1.1e-09 & 1.6e-10 & 1.5e-10&\cellcolor{Pass}1.5e-10 & 2e-09 & 1e-09 \\ \hline
\end{tabular}
\caption{Portfolio variance for GARCH(1,1)-jump model, best performance highlighted in green.}
\label{TableVar_GJ}
\end{table}

\begin{table}
%\footnotesize
\centering
\begin{tabular}{|l |l |b | b | b | b|b|b|}
\hline
%\rowcolor{LightCyan}
&  &  &KECM   &KECM & & Pairwise  & Pairwise  \\
%\rowcolor{LightCyan}
$\zeta $      & $\sigma_j^{2}$  & KEM      & Laplace & Spike     & MCMC &Refresh  & Refresh  \\
%\rowcolor{LightCyan}
& & & & \& Slab & & &(jump) \\
\hline
1 & N/A&\cellcolor{Pass}0.37 & 0.37 & 0.38 & 0.41 & 0.5 & 0.51 \\ \hline
0.9999 & 6.25e-06 & 0.43&\cellcolor{Pass}0.37 & 0.38 & 0.41 & 0.58 & 0.55 \\ \hline
0.9999 & 0.0001 & \cellcolor{Fail}3.3&\cellcolor{Pass}0.39 & 0.4 & 0.43 & \cellcolor{Fail}1.7 & 0.55 \\ \hline
0.9995 & 6.25e-06 & 0.88&\cellcolor{Pass}0.42 & 0.43 & 0.44 & 0.81 & 0.62 \\ \hline
0.9995 & 0.0001 & \cellcolor{Fail}18 & 0.65 & 0.49&\cellcolor{Pass}0.46 & \cellcolor{Fail}8.5 &\cellcolor{Fail} 2.2 \\ \hline
0.999 & 6.25e-06 & \cellcolor{Fail}1.4&\cellcolor{Pass}0.48 & 0.51 & 0.49 & \cellcolor{Fail}1.2 & 0.85 \\ \hline
0.999 & 2.5e-05 & \cellcolor{Fail}7.7 & 0.64 & 0.62&\cellcolor{Pass}0.52 & \cellcolor{Fail}4.5 & \cellcolor{Fail}1.8 \\ \hline
0.999 & 0.0001 & \cellcolor{Fail}36 & \cellcolor{Fail}1.4 & 0.67&\cellcolor{Pass}0.55 & \cellcolor{Fail}16 & \cellcolor{Fail}6.1 \\ \hline
\end{tabular}
\caption{Average covariance error for GARCH(1,1)-jump model, best performance highlighted in green.   Large errors highlighted in red.}
\label{TableCovar_GJ}
\end{table}

\subsection{Simulated Data from GARCH(1,1)-jump Model and stochastic microstructure variance}
In this section we test our algorithms under a GARCH(1,1)-jump model with stochastic microstructure variance.  This microstructure noise model accounts for a positive correlation between the bid-ask spread and the squared innovation.   This models an empirical phenomena that has been observed in many markets \cite{BidAskZhang}.   Here we assume the same efficient price innovation as the GARCH(1,1)-jump model but now we allow for time-varying variance in the microstructure noise.   In this model the variance of the microstructure noise at time $t$ for $i^{th}$ asset is
\begin{equation*}
  \left(0.1\frac{(X_{i}(t)-X_{i}(t-1)-D)^{2}}{\Gamma_{i,i}} + 0.9 \right)\tilde{\sigma}_{o,i}^{2}
\end{equation*}
which is the sum of fixed variance and time varying term which is dependent on the efficient price innovation.
Here we see that when the squared innovation equals the variance then the observation noise variance equals $\tilde{\sigma}_{o,i}^{2}$.   As in the previous simulations, $\tilde{\sigma}_{o,i}^{2}$ is chosen to be a realization of a gamma distributed random variable with shape 2 and mean $0.0002^{2}$.

The results for this model are shown in Tables \ref{TableVar_GJMN} and \ref{TableCovar_GJMN}.   A comparison of the covariance errors is shown in Table \ref{TableCovarCompare}.   From the comparison table we see that the covariance errors are larger for the non-stationary microstructure noise model.   Here the KECM-Laplace model is especially sensitive to the stochastic microstructure noise variance for $\sigma_{j}^{2}=1e-4$.   In some cases the covariance error increased by about a factor of 10.   The KECM-spike and slab and MCMC approaches were not as sensitive to the stochastic noise variance.

\definecolor{GrayTable}{gray}{0.85}

\definecolor{Pass}{rgb}{0,1,0}
\definecolor{Fail}{rgb}{1,0,0}

\newcolumntype{a}{>{\columncolor{GrayTable}}c}
\newcolumntype{b}{>{\columncolor{white}}c}

\begin{table}
%\footnotesize
\centering
\begin{tabular}{|l |l |b | b | b | b|b|b|}
\hline
%\rowcolor{LightCyan}
&  &  &KECM  &KECM  & & Pairwise  & Pairwise  \\
%\rowcolor{LightCyan}
$\zeta $      & $\sigma_j^{2}$  & KEM     & Laplace & Spike     & MCMC &Refresh & Refresh \\
%\rowcolor{LightCyan}
& & & & \& Slab & & & (jump) \\
\hline
1 & N/A&\cellcolor{Pass}1.5e-10 & 1.6e-10 & 1.6e-10 & 1.6e-10 & 2.2e-10 & 2.5e-10 \\ \hline
0.9999 & 6.25e-06&\cellcolor{Pass}1.6e-10 & 1.6e-10 & 1.6e-10 & 1.6e-10 & 3e-10 & 3e-10 \\ \hline
0.9999 & 0.0001 & 2e-10&\cellcolor{Pass}1.6e-10 & 1.6e-10 & 1.6e-10 & 5.1e-10 & 3.7e-10 \\ \hline
0.9995 & 6.25e-06 & 2.6e-10&\cellcolor{Pass}1.9e-10 & 1.9e-10 & 1.9e-10 & 4.6e-10 & 4.2e-10 \\ \hline
0.9995 & 0.0001 & 5.1e-10 & 1.8e-10 & 1.8e-10&\cellcolor{Pass}1.7e-10 & 1.5e-09 & 6.2e-10 \\ \hline
0.999 & 6.25e-06 & 2.3e-10&\cellcolor{Pass}1.5e-10 & 1.5e-10 & 1.5e-10 & 5e-10 & 4.2e-10 \\ \hline
0.999 & 2.5e-05 & 5.6e-10 & 1.7e-10 & 1.7e-10&\cellcolor{Pass}1.7e-10 & 1.4e-09 & 9.1e-10 \\ \hline
0.999 & 0.0001 & 9e-10 & 2e-10 & 1.6e-10&\cellcolor{Pass}1.5e-10 & 2.3e-09 & 9.3e-10 \\ \hline
\end{tabular}
\caption{Portfolio variance for GARCH(1,1)-jump model with stochastic microstructure noise variance, best performance highlighted in green}
\label{TableVar_GJMN}
\end{table}

\begin{table}
%\footnotesize
\centering
\begin{tabular}{|l |l |b | b | b | b|b|b|}
\hline
%\rowcolor{LightCyan}
&  &  &KECM   &KECM & & Pairwise  & Pairwise  \\
%\rowcolor{LightCyan}
$\zeta $      & $\sigma_j^{2}$  & KEM      & Laplace & Spike     & MCMC &Refresh  & Refresh \\
%\rowcolor{LightCyan}
& & & & \& Slab & & & (jump)\\
\hline
1 & N/A & 0.37&\cellcolor{Pass}0.37 & 0.38 & 0.42 & 0.57 & 0.58 \\ \hline
0.9999 & 6.25e-06 & 0.51&\cellcolor{Pass}0.42 & 0.42 & 0.47 & 0.55 & 0.53 \\ \hline
0.9999 & 0.0001 & \cellcolor{Fail}21 & \cellcolor{Fail}1.5&\cellcolor{Pass}0.38 & 0.41 & \cellcolor{Fail}2.6 & 0.65 \\ \hline
0.9995 & 6.25e-06 & 0.78&\cellcolor{Pass}0.4 & 0.41 & 0.44 & 0.82 & 0.67 \\ \hline
0.9995 & 0.0001 & \cellcolor{Fail}75 & \cellcolor{Fail}3.3 & 0.47&\cellcolor{Pass}0.44 & \cellcolor{Fail}9.5 & \cellcolor{Fail}2.6 \\ \hline
0.999 & 6.25e-06 & 1.2&\cellcolor{Pass}0.41 & 0.44 & 0.43 & \cellcolor{Fail}1.0 & 0.71 \\ \hline
0.999 & 2.5e-05 & \cellcolor{Fail}13 & 0.48 & 0.51&\cellcolor{Pass}0.46 & \cellcolor{Fail}3.5 & \cellcolor{Fail}1.6 \\ \hline
0.999 & 0.0001 & \cellcolor{Fail}130 & \cellcolor{Fail}13 & \cellcolor{Fail}2.7&\cellcolor{Pass}0.45 & \cellcolor{Fail}13 & \cellcolor{Fail}4.7 \\ \hline
\end{tabular}
\caption{Average covariance error for GARCH(1,1)-jump model with stochastic microstructure noise variance, best performance highlighted in green.   Large errors highlighted in red.}
\label{TableCovar_GJMN}
\end{table}

\definecolor{GrayTable}{gray}{0.85}

\definecolor{Pass}{rgb}{0,1,0}
\definecolor{Fail}{rgb}{1,0,0}

\newcolumntype{a}{>{\columncolor{GrayTable}}c}
\newcolumntype{b}{>{\columncolor{white}}c}

\begin{table}
%\footnotesize
\centering
\begin{tabular}{|l |l |b | b | b | b|b|b|}
\hline
%\rowcolor{LightCyan}
&  & KECM  &KECM  &KECM & KECM & MCMC  & MCMC  \\

%\rowcolor{LightCyan}
$\zeta $      & $\sigma_j^{2}$   & Laplace & Laplace & Spike    & Spike   & &  \\
%\rowcolor{LightCyan}
&  &   &   & \&Slab  & \&Slab & &   \\
%\rowcolor{LightCyan}
&     &Model 1   &Model 2 & Model 1   &Model 2 &Model 1   &Model 2 \\
\hline
1 & N/A & 0.37 & 0.37 & 0.38 & 0.38 & 0.41 & 0.42 \\ \hline
0.9999 & 6.25e-06 & 0.37 & 1.5 & 0.38 & 0.38 & 0.41 & 0.41 \\ \hline
0.9999 & 0.0001 & 0.39 & 0.42 & 0.4 & 0.42 & 0.43 & 0.47 \\ \hline
0.9995 & 6.25e-06 & 0.42 & 3.3 & 0.43 & 0.47 & 0.44 & 0.44 \\ \hline
0.9995 & 0.0001 & 0.65 & 0.4 & 0.49 & 0.41 & 0.46 & 0.44 \\ \hline
0.999 & 6.25e-06 & 0.48 & 13 & 0.51 & 2.7 & 0.49 & 0.45 \\ \hline
0.999 & 2.5e-05 & 0.64 & 0.48 & 0.62 & 0.51 & 0.52 & 0.46 \\ \hline
0.999 & 0.0001 & 1.4 & 0.41 & 0.67 & 0.44 & 0.55 & 0.43 \\ \hline
\end{tabular}
\caption{Average covariance error comparison for GARCH-jump model (model 1) and GARCH-jump model with with stochastic microstructure noise variance (model 2)}
\label{TableCovarCompare}
\end{table}

\subsection{Timing}\label{sec:TimingECM}
Average MATLAB timing of the algorithms for the GARCH(1,1)-jump model with stochastic microstructure noise variance are shown in Table \ref{TableTiming}.   The machine running the simulation has the Windows 7 operating system and an Intel i7-3740 processor with 32.0 GB of RAM.   The table shows that the pairwise refresh methods are the least computationally costly, while the MCMC method requires the most run time.   The data also indicates that the KEM, KECM, and MCMC algorithms take longer to converge when larger and more frequent jumps are present.

\definecolor{GrayTable}{gray}{0.85}

\definecolor{Pass}{rgb}{0,1,0}
\definecolor{Fail}{rgb}{1,0,0}

\newcolumntype{a}{>{\columncolor{GrayTable}}c}
\newcolumntype{b}{>{\columncolor{white}}c}

\begin{table}
%\footnotesize
\centering
\begin{tabular}{|l |l |b | b | b | b|b|b|}
\hline
%\rowcolor{LightCyan}
&  &  &KECM  &KECM  & &Pairwise &Pairwise  \\
%\rowcolor{LightCyan}
$\zeta $      & $\sigma_j^{2}$  & KEM     & Laplace & Spike     & MCMC &Refresh &Refresh \\
%\rowcolor{LightCyan}
& & (sec) & (sec)& Slab(sec) &(sec) &(sec) & jump (sec) \\
\hline
  1 & N/A & 24.9 &   77.6 &  57.5 & 182.5 &0.7 &3.1\\ \hline
 0.9999 & 6.25e-06&  28.7  & 76.4 &  58.0 & 182.4 &0.7 &3.1 \\ \hline
 0.9999 & 0.0001 &  48.9 &  83.5 &  59.8 & 184.0 &0.7 &3.1 \\ \hline 
 0.9995 & 6.25e-06 &  46.2 &  88.5 &  61.5 & 185.1 &0.7 &3.2\\ \hline
  0.9995 & 0.0001 &  95.1 & 109.9 &  64.4 & 193.8 &0.7 &3.2 \\ \hline
 0.999 & 6.25e-06 &  51.9 &  83.1 &  62.0 & 187.4 &0.8 &3.2\\ \hline
 0.999 & 2.5e-05 &  86.9 &  99.9 &  66.1 & 177.5 &0.8 &3.3\\ \hline
 0.999 & 0.0001 &  90.0 & 122.9 &  72.8 & 200.9 &0.8 & 3.2\\ \hline 
\end{tabular}
\caption{Run-time (seconds) for GARCH-jump model with stochastic microstructure noise variance}
\label{TableTiming}
\end{table}

\subsection{Numerical Results Summary}
The following are key observations from the numerical simulation results:
\begin{enumerate}
  \item Both KECM and MCMC approaches outperform KEM in the presence of jumps.
  \item Laplace prior underperforms spike and slab models for large jumps.
  \item Spike and slab models are more robust to stochastic microstructure noise variance than the Laplace prior model.
\end{enumerate}

The first observation is not surprising since both the KECM and MCMC approaches explicitly account for jumps.   The second and third observations may be the result of a large jump estimation bias that can occur when using the Laplace prior for large $\sigma_{j}^{2}$.

\section{Conclusion}
This work has introduced two jump robust KECM methods for estimating asset return covariance from high-frequency data.   The methods address several stylized facts found in high frequency data: 1) asynchronous returns, 2) market microstructure noise, and 3) jumps.   The first method, a KECM approach, was derived using both Laplace and spike and slab distributed jump models.  The second method utilized a MCMC approach to approximate the posterior mean of the covariance estimate.   Here the jumps were modeled using a spike and slab distribution.

Both proposed techniques improve covariance estimation performance versus existing methods when jumps are present and are robust to other stylized facts such as volatility clustering and non-stationary microstructure noise.   When comparing the spike and slab and Laplace jump models, the spike and slab approach demonstrated more robustness especially to larger jumps and stochastic microstructure noise variance.   The MCMC approach also shows a modest performance improvement versus the KECM methods when frequent large jumps occur.

As future work other jump models besides spike and slab and Laplace can be considered.  For example both the spike and slab and Laplace priors create a bias in the jump estimates.  The use of other penalties which induce less penalty for large jumps may reduce this bias and improve estimation performance.   Another area that can be addressed is global convergence of the KECM algorithms.   Since the KECM does not necessarily converge to a globally optimal solution additional performance gains may be achievable by attempting multiple initializations or other approaches.% such as deterministic annealing.

\appendix
    \begin{center}
      {\bf APPENDIX}
    \end{center}

\section{Kalman Smoothing Equations}\label{secAppendixSmooth}
The Kalman smoother can be used to compute the posterior distribution of $X(t)$ given $Y$ and an estimate of $\Theta= [D,\Gamma,\Sigma_{o}',J]$.
From \cite{Shumway} the posterior distribution is normal and is completely characterized by the following quantities for $m=T$
\begin{equation*}
  \bar{X}(t|m) = \E(X(t)|y(1:m))
\end{equation*}
\begin{equation*}
  P(t|m) = \cov(X(t),X(t)|y(1:m))
\end{equation*}
\begin{equation*}
  P(t,t-1|m) = \cov(X(t),X(t-1)|y(1:m)).
\end{equation*}
These values can be computed efficiently using a set of well known forward and backward recursions \cite{ShumwayBook} known as the Rauch-Tung-Striebel (RTS) smoother.   The forward recursions
are
\begin{eqnarray}\label{eq:ForwardEq}
  \bar{X}(t|t-1) &=&\bar{X}(t-1|t-1)+D+J(t) \\
  P(t|t-1) &=& P(t-1|t-1)+\Gamma \\
  G(t)&=&P(t|t-1)I(t)^{T}\left(I(t)P(t|t-1)I(t)^{T} + \Sigma_{o}^{2}(t)\right)^{-1} \\
  \bar{X}(t|t) &=&\bar{X}(t|t-1) +G(t)(y(t)-I(t)\bar{X}(t|t-1)) \\
  P(t|t)&=&P(t|t-1) - G(t)I(t)P(t|t-1)
\end{eqnarray}
with $\bar{X}(0|0)=\mu$ and $P(0|0)=K$.

The backward equations are given by
\begin{eqnarray*}
  H(t-1)&=&P(t-1|t-1)P(t|t-1)^{-1} \\
  \bar{X}(t-1|T)&=&\bar{X}(t-1|t-1) + H(t-1)(\bar{X}(t|T)-\bar{X}(t|t-1)) \\
  P(t-1|T)&=&P(t-1|t-1) \\
  &&+ H(t-1)(P(t|T)-P(t|t-1))H(t-1)^{T}.
\end{eqnarray*}
A backward recursion for computing $P(t,t-1|T)$ is
\begin{eqnarray*}
  P(t-1,t-2|T)&=& P(t-1|t-1)H(t-2)^{T} \\
  & &\; + H(t-1)\left(P(t,t-1|T)-P(t-1|t-1) \right)H(t-2)^{T} \nonumber
\end{eqnarray*}
where
\begin{equation}\label{eq:BackwardRecursion}
  P(T,T-1|T)=(I -G(T)I(T))P(T-1|T-1).
\end{equation}

\section{Derivation of Equation \eqref{eqPosteriorECM}}\label{secAppendixDerive}
Here we derive the expression for
\begin{equation*}
\mathcal{G}(\theta,\Theta^{(j)}) = \mathbb{E}_{p(x|y,\Theta^{(j)})}\log p(X(1:T),y(1:T)|\theta)+\log(p(\theta))
\end{equation*}
given in equation \eqref{eqPosteriorECM}.   First recall the equation for the log-likelihood
\begin{eqnarray}\label{Eq44_1}
  \log p(x,y|\tilde{\theta}) &=& -0.5\sum_{t=1}^{T}\log(|\Sigma_{o}(t)|)  -\frac{1}{2}\sum_{t=1}^{T}||y(t)-\tilde{I}(t)\bar{X}(t)||_{\textnormal{diag}(\Sigma_{o}(t)^{-1}),\ell_{2}}^{2} \nonumber \\
   && - \frac{T-1}{2}\log(|\Gamma|) \nonumber \\
   && - \frac{1}{2}\sum_{t=2}^{T}r(t)^{T}\Gamma^{-1}r(t) \nonumber \\
   && + const
\end{eqnarray}
where
\begin{equation*}
   r(t)=x(t)-x(t-1)-d-j(t).
\end{equation*}
First note that using the relation
\begin{equation*}
  Y(t)-\tilde{I}(t)X(t)  = Y(t)-\tilde{I}(t)(X(t)-\bar{X}(t))-\tilde{I}(t)\bar{X}(t)
\end{equation*}
we have that
\begin{eqnarray}\label{Eq44Other}
\mathbb{E}_{p(x|y,\Theta^{(j)})} ||(y(t)-\tilde{I}(t)X(t))||_{\textnormal{diag}(\Sigma_{o}(t)^{-1}),\ell_{2}}^{2} &=& ||y(t)-\tilde{I}(t)\bar{X}(t)||_{\textnormal{diag}(\Sigma_{o}(t)^{-1}),\ell_{2}}^{2}+ \nonumber \\ &&+\trace(P(t|T)\tilde{I}(t)^{T}\Sigma_{o}(t)^{-1}\tilde{I}(t)).
\end{eqnarray}
Similarly noting that for $R(t) \doteq X(t)-X(t-1)-d-j(t)$
\begin{equation}
   R(t)=X(t)-\bar{X}(t) - (X(t-1) -\bar{X}(t-1)) +(\bar{X}(t)-\bar{X}(t-1)) -d-j(t) \\
\end{equation}
we can show that
\begin{eqnarray} \label{Eq44Sum}
\sum_{t=2}^{T}\mathbb{E}_{p(x|y,\Theta^{(j)})} R(t)^{T}\Gamma^{-1}R(t) =\textnormal{tr}(\Gamma^{-1}(C-B-B^{T}+A)).
\end{eqnarray}
using the orthogonality principle.  From equations \eqref{Eq44Other} and \eqref{Eq44Sum} we arrive at \eqref{eqPosteriorECM}.

%\section{M-step in ECM algorithm}\label{secAppendixMstep}
\section{Convergence of KECM Algorithms}\label{Sec:ConvergeECM}
Convergence of the EM and ECM algorithms in general is considered in \cite{WuEM} and \cite{ECMRubin} respectively.  It is shown in \cite{ECMRubin} that the ECM algorithm converges to stationary point of the log posterior under the following mild regularity conditions
\begin{enumerate}
  \item \label{Cond:ECMA} Any sequence $\Theta^{(k)}$ obtained using the ECM algorithm lies in a compact subset of the parameter space, $\Omega$.   For our case we need to restrict the parameter space such that $\sigma_{o}^{2} \ne 0$ and $\Gamma$ is positive definite.
  \item \label{Cond:ECMB} $\mathcal{G}(\Theta,\Theta')$ is continuous in both $\Theta$ and $\Theta'$.
  \item \label{Cond:ECMC} The log posterior $L(\Theta)$ is continuous in $\Omega$ and differentiable in the interior of $\Omega$.
\end{enumerate}
\subsection{Algorithm \ref{Alg:ECM}}
Since the Laplace prior on $J$ is not differentiable condition \ref{Cond:ECMC} is not satisfied and the results in \cite{ECMRubin} are not directly applicable.   However the proofs and solution set in \cite{ECMRubin} can be modified to handle this irregularity.

Before addressing condition \ref{Cond:ECMC} we first verify condition \ref{Cond:ECMA}.  We start by examining the sequence of covariance estimates $\Gamma^{(k)}$.
\begin{lemma}\label{Lem:Gamma}
Assume a noisy asset price is observed at least one time for each asset for $t>1$ and that $\tilde{I}(t) \ne 0$ for all $t$.   Let $\Gamma^{(k)}$ be a sequence of solutions obtained with Algorithm \ref{Alg:ECM}, where $\Gamma^{(0)}$ is positive definite.  Then sequences $\Gamma^{(k)}$ and $\frac{1}{s^{(k)}}$ are bounded where $s^{(k)}$ is the minimum eigenvalue of $\Gamma^{(k)}$.  In addition the sequence $\sigma_{o,i}^{2,(k)}$ is bounded below and above by positive values for all $i$.
\end{lemma}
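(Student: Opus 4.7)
The plan is to combine direct inspection of the M-step formulas (yielding the lower bounds) with ECM monotonicity together with a crude bound on the marginal likelihood (yielding the upper bounds). The lower bound on $\sigma_{o,i}^{2,(k)}$ is immediate from \eqref{obsEM}: the numerator contains the strictly positive term $2\beta_o$ together with nonnegative summands, while the denominator is bounded, so $\sigma_{o,i}^{2,(k+1)} \geq 2\beta_o/(2\alpha_o + 2 + M_i) > 0$ for every $k$.

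For the lower bound on $s^{(k)}$, the key step is to show from \eqref{GammaEM} that $A + C^{(k)} - B^{(k)} - B^{(k)T}$ is positive semi-definite. Setting $\tilde{X}(t) = \bar{X}(t|T) - D^{(k+1)} - J^{(k)}(t)$, this matrix decomposes as
\[
\sum_{t=2}^{T}\bigl[P(t|T) + P(t-1|T) - P(t,t-1|T) - P(t,t-1|T)^{T}\bigr] + \sum_{t=2}^{T}(\tilde{X}(t) - \bar{X}(t-1|T))(\tilde{X}(t) - \bar{X}(t-1|T))^{T},
\]
where the second sum is manifestly positive semi-definite and the first sum equals $\sum_{t}\cov(X(t) - X(t-1)\mid Y,\Theta^{(k)})$ and so is also positive semi-definite. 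Positive definiteness of $W_o$ then makes $\Gamma^{(k+1)} - W_o/(T-1+\eta)$ positive semi-definite, so $s^{(k+1)} \geq \lambda_{\min}(W_o)/(T-1+\eta) > 0$.

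For the upper bound on $\sigma_{o,i}^{2,(k)}$ I would invoke the ECM monotonicity $L(\Theta^{(k)}) \geq L(\Theta^{(0)})$ (established in Appendix \ref{Sec:ConvergeECM}), where $L(\Theta) = \log p(y\mid\Theta) + \log p(\Theta)$. A uniform bound on the marginal likelihood comes from the complete-data density:
\[
p(y\mid\Theta) \leq \sup_{x} p(y\mid x,\Theta) = \prod_{t}(2\pi)^{-M_t/2}|\Sigma_o(t)|^{-1/2},
\]
giving $\log p(y\mid\Theta^{(k)}) \leq -\tfrac{1}{2}\sum_i M_i \log(2\pi\sigma_{o,i}^{2,(k)})$. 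Each factor of $p(\Theta)$ has a finite supremum: the normal on $D$, inverse Wishart on $\Gamma$, and inverse gamma on $\sigma_{o,i}^{2}$ each have a finite mode, and the joint factor $g(J\mid\lambda)q_{inv}(\lambda^{-1})$ reduces, up to constants, to $(\alpha_\lambda + 2)\log\lambda - (\beta_\lambda + |J|)\lambda$, whose maximum over $\lambda > 0$ is a decreasing function of $|J|$ and hence uniformly bounded. Substituting these bounds into the monotonicity inequality and absorbing the $\beta_o/\sigma_{o,i}^{2,(k)}$ terms using the already-established lower bound on $\sigma_{o,i}^{2,(k)}$ yields an inequality of the form $\sum_i(M_i/2 + \alpha_o + 1)\log\sigma_{o,i}^{2,(k)} \leq C$, which gives an upper bound on each $\sigma_{o,i}^{2,(k)}$.

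Finally, with $\sigma_{o,i}^{2,(k)}$ bounded on both sides, the same marginal-likelihood estimate now gives a uniform constant bound $\log p(y\mid\Theta^{(k)}) \leq C'$. Bounding every factor in $\log p(\Theta)$ other than the inverse Wishart on $\Gamma$ by its supremum, the monotonicity inequality becomes
\[
-\frac{\eta + N + 1}{2}\log|\Gamma^{(k)}| - \frac{1}{2}\trace(W_o \Gamma^{(k),-1}) \geq C'';
\]
since $W_o$ and $\Gamma^{(k)}$ are both positive definite the trace term is non-negative, so $\log|\Gamma^{(k)}|$ is bounded above, and combined with the minimum-eigenvalue lower bound $s^{(k)} \geq \lambda_{\min}(W_o)/(T-1+\eta)$ this bounds the maximum eigenvalue of $\Gamma^{(k)}$. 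The main obstacle is establishing uniform supremum bounds on all the log-prior factors---particularly the joint piece $g(J\mid\lambda)q_{inv}(\lambda^{-1})$, where the Laplace density $(\lambda/2)e^{-\lambda|J|}$ can be large when $\lambda$ is large but is tamed by the $\lambda^{\alpha_\lambda+1}e^{-\beta_\lambda\lambda}$ factor from the prior on $\lambda^{-1}$.
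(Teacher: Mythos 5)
Your proof is correct, but it reaches the upper bounds by a genuinely different route than the paper. On the lower bound for $s^{(k)}$ you in fact prove more than the paper writes down: the paper simply asserts that positive definiteness of $W_{o}$ in \eqref{GammaEM} gives the bound, which tacitly requires $A+C^{(k)}-B^{(k)}-B^{(k)T}\succeq 0$; your decomposition into $\sum_{t}\cov(X(t)-X(t-1)\mid Y,\Theta^{(k)})$ plus a sum of outer products supplies that missing detail. For the upper bounds the paper argues through the one-step-ahead predictive factorization $p(y\mid\theta)=\prod_{t}p(y(t)\mid y(1:t-1),\theta)$: if $\Gamma^{(k)}$ were unbounded, the predictive covariance $Q(\tau)=\tilde{I}(\tau)P(\tau|\tau-1)\tilde{I}(\tau)^{T}+\sigma_{o}^{2}I$ would be unbounded at some $\tau$ --- this is exactly where the hypothesis that every asset is observed at least once for $t>1$ enters --- driving a subsequence of posterior values to zero and contradicting ECM monotonicity. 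You instead bound the marginal likelihood by $\sup_{x}p(y\mid x,\Theta)$, which depends only on $\Sigma_{o}$, and then let the properness of the priors do the work: the inverse gamma density forces $\sigma_{o,i}^{2,(k)}$ to stay bounded above, and the inverse Wishart density forces $\log|\Gamma^{(k)}|$ to stay bounded above, which together with the eigenvalue floor controls $\lambda_{\max}(\Gamma^{(k)})$. Your route never uses the observation-coverage hypothesis (the prior alone tames $\Gamma$) and it makes explicit the uniform bound on the joint factor $g(J\mid\lambda)q_{inv}(\lambda^{-1})\propto\lambda^{\alpha_{\lambda}+2}e^{-(\beta_{\lambda}+|J|)\lambda}$ that both arguments need in order to have $\sup_{\theta}p(\theta)<\infty$; the paper's route has the virtue of showing that the data themselves, not merely the prior, constrain $\Gamma$, which is what one would have to fall back on if the $\Gamma$-prior were improper or flat. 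Both arguments share the same skeleton (ECM monotonicity against a quantity that tends to $-\infty$ along any unbounded subsequence) and both implicitly require $L(\Theta^{(0)})>-\infty$.
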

\begin{proof}

Since $W_{o}$ is positive definite we have from equation \eqref{GammaEM} that $s^{(k)}$ is bounded below by a positive constant which implies $\frac{1}{s^{(k)}}$ is bounded.  Similarly by equation \eqref{obsEM} we have $\sigma_{o,i}^{2,(k)}$ is bounded below by a positive constant.
To prove that $\Gamma^{(k)}$ is bounded we note that the posterior may be written as
\begin{eqnarray}
  p(\theta|y)&=&C_{1}p(y|\theta)p(\theta) \nonumber \\
&=&C_{1}p(y(1)|\theta)p(\theta)\prod_{t=2}^{T}p(y(t)|y(1:t-1),\theta) \nonumber \\
&\le& C_{2} p(y(1)|\theta)\prod_{t=2}^{T}p(y(t)|y(1:t-1),\theta) \nonumber \label{ineqECMconverge}
\end{eqnarray}
where $C_{1}$ is a constant not dependent on $\theta$ and where $C_{2} = C_{1}\sup_{\theta}p(\theta)$.  Note that $C_{2} < \infty$.

For $t>1$ each of the conditional distributions  $p(y(t)|y(1:t-1),\theta)$ is a normal distribution with covariance
\begin{equation*}
  Q(t)=\tilde{I}(t)P(t|t-1)\tilde{I}(t)^{T}+\sigma_{o}^{2}I
\end{equation*}
where for notational simplicity we suppress the dependence of $Q(t)$ and $P(t|t-1)$ on $k$.   Since $\sigma_{o,i}^{2}$ is bounded below by a positive value, it follows that $\frac{1}{|Q(t)|}$ is bounded.

Now suppose that $\Gamma^{(k)}$ is unbounded.   Then since
 \begin{equation*}
   P(t|t-1) = P(t-1|t-1)+\Gamma
 \end{equation*}
  $P(t|t-1)$ is unbounded as $k$ goes to $\infty$.   Since an observation of each asset's price occurs at least once for $t>1$ it follows that $Q(\tau)$ is unbounded (as $k \rightarrow \infty$) for some $\tau>1$.   Then since the smallest eigenvalue of $Q(\tau)$ is bounded below by a positive constant, the determinant of $Q(\tau)$ is unbounded.  Thus a subsequence of $p(y(\tau)|y(1:\tau-1),\Theta^{(k)})$ will approach 0.  Since $\frac{1}{|Q(t)|}$ is bounded, $p(y(t)|y(1:t-1),\Theta^{(k)})$ will remain bounded above for all $t$.   Then using \eqref{ineqECMconverge} we have
\begin{eqnarray*}
  p(\theta|y) &\le& C_{2} p(y(1)|\theta)\prod_{t=2}^{T}p(y(t)|y(1:t-1),\theta)  \\
  &=& C_{2} p(y(\tau)|y(1:\tau-1),\theta)\prod_{t\ne \tau}^{T}p(y(t)|y(1:t-1),\theta)
\end{eqnarray*}
which implies a subsequence of $p(\Theta^{(k)}|y)$ will converge to 0.   This contradicts the monotonicity of the ECM algorithm \cite{ECMRubin}.
The proof that the sequence $\sigma_{o,i}^{2,(k)}$ is bounded above for all $i$ is similar.
\end{proof}
\begin{lemma}
Assume the conditions of Lemma \ref{Lem:Gamma}.   Let $\lambda(t)^{-1,(k)}$ be a sequence of solutions obtained with Algorithm \ref{Alg:ECM} where $\Gamma^{(0)}$ is positive definite.   Then there exist finite positive numbers $a,b$ where $a\le \lambda_{i}(t)^{(k)} \le b$ for all $t,k$ and $i$.
\end{lemma}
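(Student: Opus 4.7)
The upper bound is immediate from equation \eqref{eqLamUp}: since $|J^{(k+1)}_i(t)| \ge 0$ and both $\alpha_\lambda,\beta_\lambda > 0$,
$$\lambda_i(t)^{(k+1)} = \frac{\alpha_\lambda+2}{|J^{(k+1)}_i(t)|+\beta_\lambda} \le \frac{\alpha_\lambda+2}{\beta_\lambda} =: b,$$
so the upper bound holds for every iterate produced by the update (and can be assumed for the initialization). Thus the positive lower bound $a$ reduces, via the same formula, to showing that $\sup_{k,i,t}|J^{(k)}_i(t)| < \infty$.

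To control $\|J^{(k+1)}(t)\|$, I would exploit the $\ell_1$--penalized quadratic program \eqref{eqJumpProb}. Since $j=0$ is feasible with objective value $0$, the optimum is $\le 0$; discarding the non-negative $\ell_1$ penalty and applying Cauchy--Schwarz in the $(\Gamma^{(k+1)})^{-1}$ inner product yields
$$\|J^{(k+1)}(t)\|_{(\Gamma^{(k+1)})^{-1}} \le 2\|\Delta^{(k+1)}(t)\|_{(\Gamma^{(k+1)})^{-1}}.$$
By Lemma~\ref{Lem:Gamma}, the eigenvalues of $\Gamma^{(k+1)}$ are uniformly bounded above and away from zero in $k$, so this translates into $\|J^{(k+1)}(t)\|_2 \le C_1\|\Delta^{(k+1)}(t)\|_2$ for a constant $C_1$ independent of $k$. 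Writing the smoother outputs via the recursions \eqref{eq:ForwardEq}--\eqref{eq:BackwardRecursion} as linear functions of $\mu$, $\{y(s)\}$, $D^{(k)}$ and $\{J^{(k)}(s)\}$, with coefficient matrices uniformly bounded by Lemma~\ref{Lem:Gamma}, and using \eqref{driftEM} with $\|F\|\le\sigma_D^2$, one obtains a recursive estimate
$$B_{k+1} \le C_2(1+B_k), \qquad B_k := \max\bigl(\|D^{(k)}\|,\ \max_t\|J^{(k)}(t)\|\bigr).$$

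The main obstacle is that $C_2$ need not be less than $1$, so this chain of linear bounds alone does not preclude growth of $B_k$. My plan is to close the argument through the monotonicity of the ECM iteration: $L(\Theta^{(k)})$ is non-decreasing and bounded above by the normalizing constant of the (proper) posterior, hence every iterate lies in the sublevel set $\{\theta : L(\theta)\ge L(\Theta^{(0)})\}$. I then show this sublevel set is bounded in the $(D,J)$--coordinates. The Gaussian prior on $D$ contributes a coercive term $-\|D-\bar D\|^2/(2\sigma_D^2)$, so $\|D^{(k)}\|$ is controlled directly. For $J$, I argue by contradiction: if $|J^{(k)}_i(t)|$ diverged along a subsequence, then either (i) $\|D^{(k)}\|$ remains bounded and the mean of $y$ under $\Theta^{(k)}$---linear in $J$ and $D$---shifts unboundedly at some time an observation of asset $i$ occurs, driving $\log p(y|\Theta^{(k)})\to-\infty$ quadratically in $|J|$; or (ii) $\|D^{(k)}\|$ also diverges and the coercive $D$--prior yields the same conclusion. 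The joint Laplace--gamma-on-$\lambda$ contribution can compensate by at most $O(\log|J|)$, since optimizing $\lambda$ out gives an effective penalty of $-(\alpha_\lambda+2)\log(|J|+\beta_\lambda)$, which the quadratic likelihood decay dominates. This contradicts monotonicity of $L(\Theta^{(k)})$ and yields $\sup_{k,i,t}|J^{(k)}_i(t)|<\infty$, giving the required positive lower bound $a$.
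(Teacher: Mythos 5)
Your upper bound and your reduction of the lower bound to $\sup_{k,i,t}|J^{(k)}_{i}(t)|<\infty$ via \eqref{eqLamUp} are exactly what the paper does, and your closing mechanism --- confine the iterates to a superlevel set of $L$ using ECM monotonicity and show that $L\to-\infty$ along any subsequence with $|J^{(k)}_{i}(t)|\to\infty$ --- is also the paper's mechanism. The difference is in what drives $L\to-\infty$. The paper needs only two facts, both already secured by Lemma \ref{Lem:Gamma}: the likelihood $p(y|\theta)$ is \emph{bounded above} (because $1/|Q(t)|$ is bounded), and the joint prior factor $g(j_{i}(t)|\lambda_{i}(t))\,q_{inv}(\lambda_{i}(t)^{-1})$, evaluated along the update rule $\lambda_{i}(t)^{-1}=(|j_{i}(t)|+\beta_{\lambda})/(\alpha_{\lambda}+2)$, tends to zero (polynomially, of order $|j|^{-(\alpha_{\lambda}+2)}$) as $|j_{i}(t)|\to\infty$, while every other prior factor stays bounded. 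That alone sends the posterior to zero and contradicts monotonicity; no decay of the likelihood is required.

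Your version instead leans on the claim that $\log p(y|\Theta^{(k)})\to-\infty$ quadratically in $|J|$ because the predicted mean of $y$ ``shifts unboundedly at some time an observation of asset $i$ occurs.'' That step is not justified under the stated hypotheses: Lemma \ref{Lem:Gamma} only assumes each asset is observed at least once for $t>1$, not that asset $i$ is observed at or after the time $\tau$ of the divergent jump; a huge $J_{i}(\tau)$ with no subsequent observation of asset $i$ leaves the likelihood unaffected. Fortunately you do not need this claim --- you already wrote down the effective prior penalty $-(\alpha_{\lambda}+2)\log(|J|+\beta_{\lambda})$, which by itself diverges to $-\infty$; combined with the upper bound on the likelihood from Lemma \ref{Lem:Gamma} it closes the contradiction. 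I would therefore promote that logarithmic penalty from a term ``to be dominated'' to the actual engine of the argument, and drop both the quadratic-likelihood claim and the entire first detour (the bound $\|J^{(k+1)}(t)\|_{(\Gamma^{(k+1)})^{-1}}\le 2\|\Delta^{(k+1)}(t)\|_{(\Gamma^{(k+1)})^{-1}}$ from \eqref{eqJumpProb} and the recursion $B_{k+1}\le C_{2}(1+B_{k})$), which, as you note yourself, cannot close the argument and plays no role in the final proof.
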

\begin{proof}
By the update equation \eqref{eqLamUp} we may set $b=\frac{\alpha_{\lambda}+2}{\beta_{\lambda}}$ which is positive and finite.  By way of contradiction assume the lower bound does not hold.  Then for some $i$ and $t$ there exists a subsequence $\lambda_{i}(t)^{(k_{n})}$ such that $\lim_{n \rightarrow \infty} \lambda_{i}(t)^{-1,(k_{n})} =\infty$.   Since each $\lambda_{i}(t)^{-1}$ is the mode of an inverse gamma distribution it follows that the posterior scale parameter,($\beta_{\lambda}+|j^{(k_{n})}|$) goes to infinity . This implies that $p(\lambda_{i}(t)^{-1,(k_{n})},j_{i}(t)^{(k_{n})}) \rightarrow 0$.   Since each prior density function is bounded as $\lambda_{i}(t) \rightarrow 0$ this implies that $p(\theta)$ goes to zero, contradicting the monotonicity of the ECM algorithm.   Thus there exists an $a>0$ such that $\lambda_{i}(t)^{(k)}>a$ for all $t,k$ and $i$.
\end{proof}

Now we prove that the sequences $J^{(k)}$ and $D^{(k)}$ are also well behaved.
\begin{lemma}
Assume the conditions of Lemma \ref{Lem:Gamma}.   Let $J^{(k)}$ and $D^{(k)}$ be sequences of solutions obtained with Algorithm \ref{Alg:ECM} where $\Gamma^{(0)}$ is positive definite.  Then sequences $J^{(k)}$ and $D^{(k)}$ are bounded.
\end{lemma}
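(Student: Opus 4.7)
The plan is to reduce the boundedness of $\{J^{(k)}\}$ and $\{D^{(k)}\}$ to the monotonicity of the ECM iterations, exactly as was done for $\Gamma^{(k)}$ and $\sigma_{o,i}^{2,(k)}$ in Lemma \ref{Lem:Gamma}. Concretely, I would argue by contradiction: suppose a subsequence of $\|D^{(k)}\|$ or $\|J^{(k)}\|$ tends to infinity, and show that $p(\Theta^{(k)}|y) \to 0$ along this subsequence, which violates monotonicity $p(\Theta^{(k)}|y) \ge p(\Theta^{(0)}|y) > 0$ guaranteed by the ECM construction.

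First I would establish a uniform upper bound on the marginal likelihood $p(y|\Theta^{(k)})$. Marginalizing out $X(1\!:\!T)$, the density $p(y|\Theta)$ is a Gaussian in $y$ whose covariance depends only on $\Gamma$, $\sigma_{o}^{2}$, and $\tilde{I}(1\!:\!T)$ (the parameters $D$ and $J$ only shift the mean). By the bounds already proved in Lemma \ref{Lem:Gamma}, $\Gamma^{(k)}$ lies in a compact subset of the positive definite cone and each $\sigma_{o,i}^{2,(k)}$ lies in a compact subset of $(0,\infty)$, so the predictive covariance at each time step has determinant bounded below by a positive constant. Hence there exists $M<\infty$, independent of $k$, with $p(y|\Theta^{(k)}) \le M$.

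Next I would exploit the decay of the priors $p(D)$ and $g(J|\lambda)$ at infinity. The drift prior is $\mathcal{N}(\bar{D},\sigma_D^2 I)$, which is bounded above and satisfies $p(D)\to 0$ as $\|D\|\to\infty$. The jump prior
\begin{equation*}
g(J^{(k)}|\lambda^{(k)}) \;=\; \prod_{i,t}\frac{\lambda_{i}(t)^{(k)}}{2}\exp\!\left(-\lambda_{i}(t)^{(k)}|J_{i}(t)^{(k)}|\right)
\end{equation*}
can be controlled using the previous lemma, which gives $0<a\le \lambda_{i}(t)^{(k)}\le b<\infty$ uniformly in $i,t,k$. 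Thus along any subsequence with $\|J^{(k)}\|\to\infty$, at least one $|J_{i}(t)^{(k)}|\to\infty$ and the corresponding factor $\tfrac{\lambda}{2}e^{-\lambda|j|}$ tends to zero (with the remaining factors bounded by $b/2$), so $g(J^{(k)}|\lambda^{(k)})\to 0$. Combining the likelihood bound with the prior factorization $p(\theta)=p(\theta_{1})p(\theta_{2})p(\theta_{3})g(\theta_{4}|\lambda)q_{inv}(\lambda^{-1})$ and the boundedness of the remaining prior factors (again from Lemma \ref{Lem:Gamma}), we obtain $p(\Theta^{(k)}|y)\le M'\cdot p(D^{(k)})\,g(J^{(k)}|\lambda^{(k)})\to 0$ along the hypothesized subsequence.

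The main (minor) obstacle is that $D$ and $J$ are coupled inside $p(y|\Theta)$, so I must be careful that the uniform bound $M$ on the likelihood does not degenerate when the mean of the Gaussian likelihood drifts to infinity — but since the Gaussian density in $y$ is bounded by its peak value, which depends only on the covariance and not on the mean, this is immediate. Monotonicity of the ECM iterates then delivers the desired contradiction, and because the argument for $D^{(k)}$ and $J^{(k)}$ are symmetric (each prior contributes independent decay at infinity), both sequences are bounded simultaneously.
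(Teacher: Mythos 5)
Your proposal is correct and follows essentially the same route as the paper: bound the likelihood $p(y|\Theta^{(k)})$ above uniformly using the covariance bounds from Lemma \ref{Lem:Gamma} (noting that $D$ and $J$ only shift the Gaussian mean), use the uniform bounds $a \le \lambda_i(t)^{(k)} \le b$ to force the Laplace prior factor to vanish as $\|J\|\to\infty$ and the Gaussian drift prior to vanish as $\|D\|\to\infty$, and conclude via the monotonicity of the ECM iterates. Your write-up is simply a more detailed expansion of the paper's terse argument; no gap.
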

\begin{proof}
From Lemma \ref{Lem:Gamma} the likelihood $p(y|\theta)$ is bounded above.  Recall from the previous lemma that there exists an $a>0$ such that for all $k$, $\lambda_{i}(t)^{(k)} \ge a$.  Since the prior density function is bounded above for each parameter it follows that $\lim_{j \rightarrow \infty} p(\theta) =0$.   This implies $J^{(k)}$ is bounded by the monotonicity of the ECM algorithm.   Since $\lim_{d \rightarrow \infty} p(\theta) =0$ it also follows that $D^{(k)}$ is bounded.
\end{proof}

The above lemmas imply the following corollary.
\begin{corr}\label{corr:ECM1}
The sequence $\Theta^{(k)}$ is bounded and all limit points are feasible ( e.g. variance non-zero, positive definite covariance).
\end{corr}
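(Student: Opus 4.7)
The plan is to assemble the corollary directly from the three preceding lemmas, since they already provide component-wise boundedness for each piece of $\Theta = [D, \Gamma, \sigma_{o}^{2}, J, \lambda^{-1}]$. First I would note that Lemma \ref{Lem:Gamma} bounds $\Gamma^{(k)}$ and $\sigma_{o,i}^{2,(k)}$ above, the second lemma bounds $\lambda_{i}(t)^{-1,(k)} = 1/\lambda_{i}(t)^{(k)}$ above since $\lambda^{(k)}$ is bounded away from zero (with lower bound $a$), and the third lemma bounds $J^{(k)}$ and $D^{(k)}$. Taking the product of these bounded sets gives a bounded set in $\Omega$ containing the entire sequence $\Theta^{(k)}$, so by Bolzano--Weierstrass every subsequence has a convergent sub-subsequence.

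Second, I would verify that every limit point lies in the feasible parameter space. The feasibility requirements flagged in condition~\ref{Cond:ECMA} are that $\sigma_{o,i}^{2} > 0$ and $\Gamma$ is positive definite; we should also ensure $\lambda_{i}(t)^{-1} > 0$. For $\sigma_{o,i}^{2}$, Lemma \ref{Lem:Gamma} gives a strictly positive lower bound uniform in $k$, so any limit point inherits this positive lower bound. For $\Gamma$, the same lemma bounds the minimum eigenvalue $s^{(k)}$ below by a positive constant (coming from the $W_{o}$ term in \eqref{GammaEM}); by continuity of the minimum-eigenvalue map, any limit of $\Gamma^{(k)}$ has minimum eigenvalue at least this positive constant, hence is positive definite. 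Similarly $\lambda_{i}(t)^{-1,(k)}$ is uniformly bounded below by $1/b > 0$, so its limits are strictly positive.

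Since the proof is just a routine concatenation of the three lemmas plus a continuity argument for the spectrum of $\Gamma$, I do not anticipate any substantive obstacle. The only mild subtlety is to be careful that ``feasibility'' here includes all the implicit constraints used in the convergence theory of \cite{ECMRubin}, namely that the Hessians of the log-posterior remain well-defined, which is guaranteed by $\sigma_{o,i}^{2}$ and $\Gamma$ being bounded away from the boundary of the positive cone and $\lambda^{-1}$ being strictly positive. I expect the proof to fit in a few lines.
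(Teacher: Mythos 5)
Your argument is correct and follows essentially the same route as the paper, which simply notes that the corollary is an immediate consequence of the three preceding lemmas: component-wise boundedness of $D^{(k)}$, $\Gamma^{(k)}$, $\sigma_{o,i}^{2,(k)}$, $J^{(k)}$, and $\lambda_{i}(t)^{-1,(k)}$, together with the uniform positive lower bounds on $\sigma_{o,i}^{2,(k)}$, on the minimum eigenvalue $s^{(k)}$, and on $\lambda_{i}(t)^{-1,(k)}$, which pass to any limit point. Your additional remarks on continuity of the minimum-eigenvalue map just make explicit what the paper leaves implicit.
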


Now we derive some additional properties of the limit points of $\Theta^{(k)}$.   To do this we shall refer to Zangwill's convergence theorem \cite{ZangwillBook}.   To use Zangwill's theorem, we first define $\mathcal{A}$ to be a point to set mapping  defined by the ECM algorithm i.e. $\Theta^{(k+1)} \in \mathcal{A}(\Theta^{(k)})$.   Let us define a solution set, $\mathcal{S}$, as the set of $\theta$ such that
\begin{eqnarray*}
  \theta_{1} = \arg\max_{v} \mathcal{G}\left(\left[v,\theta_{2},\theta_{3},\theta_{4},\theta_{5} \right],\theta \right) \\
  \theta_{2} = \arg\max_{v} \mathcal{G}\left(\left[\theta_{1},v,\theta_{3},\theta_{4},\theta_{5} \right],\theta \right) \\
  \theta_{3} = \arg\max_{v} \mathcal{G}\left(\left[\theta_{1},\theta_{2},v,\theta_{4},\theta_{5} \right],\theta \right) \\
  \theta_{4} = \arg\max_{v} \mathcal{G}\left(\left[\theta_{1},\theta_{2},\theta_{3},v,\theta_{5} \right],\theta \right) \\
  \theta_{5} = \arg\max_{v} \mathcal{G}\left(\left[\theta_{1},\theta_{2},\theta_{3},\theta_{4},v \right],\theta \right). \\
\end{eqnarray*}
By definition $\theta \in \mathcal{A}(\theta)$ for all $\theta \in \mathcal{S}$.  This along with the the monotonicity of the ECM algorithm implies that $L(\theta)$ is an ascent function, i.e.
\begin{eqnarray*}
  L(\theta')> L(\theta)  \textnormal{ for all } \theta \not\in \mathcal{S}, \theta' \in \mathcal{A}(\theta) \nonumber \\
  L(\theta')\ge L(\theta)  \textnormal{ for all } \theta \in \mathcal{S}, \theta' \in \mathcal{A}(\theta).\nonumber
\end{eqnarray*}
Since $\mathcal{G}(\theta,\theta')$ is continuous in both $\theta$ and $\theta'$ we have that $\mathcal{A}$ is a closed mapping.  Thus we have the following theorem.
\begin{theorem}\label{thm:ECMconverge1}
All limit points of $\Theta^{(k)}$ belong to $\mathcal{S}$.
\end{theorem}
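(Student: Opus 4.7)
The plan is to apply Zangwill's global convergence theorem directly, since the paragraph preceding the theorem statement has essentially already assembled the needed ingredients. Zangwill's theorem requires four conditions: (i) the iterates $\{\Theta^{(k)}\}$ lie in a compact subset of the parameter space; (ii) the posterior $L$ is continuous on that set; (iii) $\mathcal{A}$ is a closed point-to-set mapping at every $\theta \notin \mathcal{S}$; and (iv) $L$ is an ascent function with respect to $\mathcal{A}$ and the solution set $\mathcal{S}$. If all four hold, every limit point of $\{\Theta^{(k)}\}$ must lie in $\mathcal{S}$, which is exactly the claim.

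First I would invoke Corollary \ref{corr:ECM1} together with the preceding three lemmas to obtain condition (i): the sequence $\Theta^{(k)}$ is bounded, and its closure is contained in the feasible region where $\Gamma$ is positive definite with $\sigma_{o,i}^2$ and $\lambda_i(t)$ bounded away from $0$ and $\infty$ by constants depending only on the priors and the data. Thus the iterates lie in a compact subset of the interior of the parameter space. Condition (ii) is immediate because $L(\theta)$ is a finite sum of smooth terms in $\Gamma$, $D$, $\sigma_{o}^2$, $\lambda^{-1}$ and the $\log$-posterior contribution of the Laplace prior, $-\sum_{i,t} \lambda_i(t)|J_i(t)| + \sum_{i,t}\log(\lambda_i(t)/2)$, which is continuous in $(J,\lambda)$ even though not differentiable at $J_i(t)=0$. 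Condition (iv) is already spelled out in the text: by the standard monotonicity argument for ECM, $L(\theta') \ge L(\theta)$ whenever $\theta' \in \mathcal{A}(\theta)$, with strict inequality unless all five conditional maximizations already leave $\theta$ fixed, i.e.\ unless $\theta \in \mathcal{S}$.

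The main obstacle will be condition (iii), the closedness of $\mathcal{A}$ at $\theta \notin \mathcal{S}$. My plan is to treat $\mathcal{A}$ as the composition $\mathcal{A} = \mathcal{A}_5 \circ \mathcal{A}_4 \circ \mathcal{A}_3 \circ \mathcal{A}_2 \circ \mathcal{A}_1$, where each $\mathcal{A}_i$ is the point-to-set map obtained by argmaxing $\mathcal{G}$ in the $i$-th block coordinate. Because compositions of closed maps into compact sets are closed, it suffices to show each $\mathcal{A}_i$ is closed. For $i \in \{1,2,3,5\}$ the subproblems have unique, explicit closed-form solutions (equations \eqref{driftEM}, \eqref{GammaEM}, \eqref{obsEM}, \eqref{eqLamInvUp}) that are continuous functions of the previous iterate on the compact set above, so these $\mathcal{A}_i$ are in fact single-valued continuous maps. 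For $i=4$, the Laplace step, the subproblem \eqref{eqJumpProb} is a strictly convex quadratic plus an $\ell_1$ penalty (strict convexity coming from $\Gamma^{(k+1)}$ being positive definite on the compact set), so it also has a unique minimizer; continuity of this minimizer in the data parameters follows from the Berge maximum theorem applied to the continuous objective on a compact sublevel set. Hence $\mathcal{A}$ is closed everywhere, and in particular at every $\theta \notin \mathcal{S}$.

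With all four hypotheses verified, Zangwill's theorem applies and yields that every limit point of $\Theta^{(k)}$ lies in $\mathcal{S}$, completing the proof. The only delicate issue to document carefully is that the usual differentiability-based convergence results of \cite{ECMRubin} do not apply because the Laplace prior is non-smooth at zero; the workaround is precisely to use Zangwill's theorem, which requires only continuity of $L$ and closedness of $\mathcal{A}$, both of which survive the non-smoothness.
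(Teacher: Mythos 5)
Your proposal is correct and follows essentially the same route as the paper: both invoke Zangwill's global convergence theorem, using the compactness of the iterates from Corollary \ref{corr:ECM1}, the ascent property supplied by ECM monotonicity, and the closedness of the algorithmic map $\mathcal{A}$. The only difference is one of detail: the paper deduces closedness in a single line from the joint continuity of $\mathcal{G}(\theta,\theta')$, whereas you verify it by decomposing $\mathcal{A}$ into single-valued continuous block updates (using uniqueness of the $\ell_{1}$-penalized quadratic minimizer for the jump step and Berge's maximum theorem), which is a more careful justification of the same hypothesis rather than a different argument.
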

\begin{proof}
This is a direct consequence of Zangwill's convergence theorem \cite{ZangwillBook} (also known as the Global convergence theorem \cite{LuenbergerBook}).   To invoke the theorem we must meet the following conditions
\begin{itemize}
   \item $\Theta^{(k)}$ belongs to a compact subset of the feasible solutions
   \item $\mathcal{A}$ is closed
   \item There exists a continuous ascent function
 \end{itemize}
 All three of these conditions were shown above, thus the theorem follows from Zangwill's convergence theorem.
\end{proof}

Now we show that if $\theta' \in \mathcal{S}$ then $\theta'$ is in some sense a ``stationary'' point of the log posterior $L(\theta)=\log p(\theta|y)$.
\begin{theorem}\label{Thm:ECMconvLaplace}
Let $\theta' \in \mathcal{S}$.   Then
\begin{equation*}
  \nabla_{\theta_{i}} L(\theta)_{|\theta=\theta'} =0 \textnormal{ for } i \in {1,2,3,5}
\end{equation*}
and
\begin{equation*}
  0 \in \partial_{\theta_{4}} L(\theta)_{|\theta=\theta'}.
\end{equation*}
\end{theorem}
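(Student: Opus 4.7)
The plan rests on the standard EM decomposition. By definition of $\mathcal{G}$, one has the identity
\begin{equation*}
L(\theta) = \mathcal{G}(\theta,\theta') + H(\theta,\theta'), \qquad H(\theta,\theta') \doteq -\mathbb{E}_{p(x|y,\theta')}\log p(X(1{:}T)|y(1{:}T),\theta) - \log p(y),
\end{equation*}
and by Gibbs' inequality $H(\theta,\theta') \ge H(\theta',\theta')$ for every $\theta$, so $\theta \mapsto L(\theta) - \mathcal{G}(\theta,\theta')$ attains its global minimum at $\theta = \theta'$. The crucial observation is that every source of non-smoothness of $L$ and $\mathcal{G}$ in the variable $\theta_4$ comes from the Laplace prior term $\log g(\theta_4|\lambda) = \sum_{i,t}[\log(\lambda_i(t)/2) - \lambda_i(t)|j_i(t)|]$, which sits inside $\log p(\theta)$ and therefore enters both $L$ and $\mathcal{G}$ with identical functional form. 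Consequently the difference $H(\theta,\theta')$ is smooth in all five blocks at $\theta'$: it depends on $\theta_4$ only through the Gaussian conditional density $p(X|y,\theta)$, whose mean $X(t-1)+D+J(t)$ is smooth (in fact affine) in $J$.

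Given smoothness of $H$ and the interior location of $\theta'$ (guaranteed by Corollary \ref{corr:ECM1} together with the boundedness of $\lambda^{-1,(k)}$ away from zero), Fermat's rule gives $\nabla_\theta H(\theta,\theta')|_{\theta = \theta'} = 0$. Applied block by block and combined with the decomposition $L = \mathcal{G} + H$, this yields
\begin{equation*}
\partial_{\theta_i} L(\theta)|_{\theta=\theta'} \;=\; \partial_{\theta_i} \mathcal{G}(\theta,\theta')|_{\theta=\theta'} \qquad (i = 1,\dots,5),
\end{equation*}
where the subdifferential collapses to a gradient for every block except $\theta_4$. For $i \in \{1,2,3,5\}$ the map $\theta_i \mapsto \mathcal{G}([\theta_{-i}',\theta_i],\theta')$ is smooth on the interior of the parameter space (normal, inverse-Wishart, inverse-gamma and inverse-gamma priors respectively), and the hypothesis $\theta' \in \mathcal{S}$ says $\theta'_i$ is a maximizer, so the interior first-order condition gives $\nabla_{\theta_i}\mathcal{G}(\theta,\theta')|_{\theta=\theta'} = 0$, and hence $\nabla_{\theta_i} L(\theta') = 0$. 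For $i = 4$, the function $\theta_4 \mapsto \mathcal{G}(\cdot,\theta')$ is concave and coercive (a negative-definite quadratic in $j$ plus the convex $\ell_1$ penalty, as computed in the derivation of \eqref{eqJumpProb}); since $\theta'_4$ is an unconstrained global maximizer, Fermat's rule for convex minimization yields $0 \in \partial_{\theta_4} \mathcal{G}(\theta',\theta')$, and the subdifferential identity above then gives $0 \in \partial_{\theta_4} L(\theta')$.

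The main obstacle is the non-differentiability from the Laplace prior, and the whole argument is designed to sidestep it: the $\ell_1$ terms cancel in $H$, so the standard EM first-order calculus applies to $H$ and transfers cleanly to $L$ via the additive subdifferential rule for a nonsmooth convex function plus a smooth function. The only other point that needs care is ensuring that $\theta'$ lies in the interior of the feasible set, so that unconstrained first-order conditions really are available; this is precisely what Lemma \ref{Lem:Gamma} and its successors were set up to supply through Corollary \ref{corr:ECM1}.
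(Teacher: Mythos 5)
Your proposal is correct and follows essentially the same route as the paper's proof: the EM decomposition of $L$ into $\mathcal{G}$ and the entropy-type term $H$, the information (Gibbs) inequality to conclude $\nabla_{\theta}H(\theta,\theta')|_{\theta=\theta'}=0$, interior first-order conditions for the smooth blocks $i\in\{1,2,3,5\}$, and the subdifferential transfer for the $\ell_{1}$-penalized block $\theta_{4}$. Your added justifications --- that the Laplace prior cancels in $H$ so that $H$ is smooth in $\theta_{4}$, and that interiority is supplied by Corollary \ref{corr:ECM1} --- merely make explicit what the paper asserts.
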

\begin{proof}
To show this we first note that $L(\theta)$ can be written as \cite{ECMRubin}
\begin{equation*}
  L(\theta|y) = \mathcal{G}(\theta,\theta') - H(\theta,\theta')
\end{equation*}
where
\begin{equation*}
  H(\theta,\theta') = \E_{p(x|y,\theta')}\log p(X|y,\theta).
\end{equation*}
From the information inequality we have that $H(\theta',\theta') \ge H(\theta,\theta')$ for all feasible $\theta$.  Since $H(\theta,\theta')$ is differentiable with respect to $\theta$ it follows that
\begin{equation*}
  \nabla_{\theta} H(\theta,\theta')_{|\theta=\theta'} = 0.
\end{equation*}
Since $\nabla_{\theta_{i}}\mathcal{G}(\theta,\theta')_{|\theta=\theta'}=0$ \textnormal{ for } $i \in {1,2,3,5}$
it follows that
\begin{equation*}
  \nabla_{\theta_{i}} L(\theta)_{|\theta=\theta'} =0 \textnormal{ for } i \in {1,2,3}
\end{equation*}
Also since $\mathcal{G}(\theta,\theta')$ and $H(\theta,\theta')$ are convex in $j$, and $\theta' \in \mathcal{S}$, it follows that
\begin{equation*}
  0 \in \partial_{\theta_{4}} \mathcal{G}(\theta,\theta')
\end{equation*}
 which implies
 \begin{equation*}
  0 \in \partial_{\theta_{4}} L(\theta,\theta').
\end{equation*}
\end{proof}
\subsection{Algorithm \ref{Alg:ECMspike}}
Analogous results to Corollary \ref{corr:ECM1} and Theorem \ref{thm:ECMconverge1} may proven for Algorithm \ref{Alg:ECMspike} using same arguments as Algorithm \ref{Alg:ECM}.  The following result is analogous to Theorem \ref{Thm:ECMconvLaplace}.
\begin{theorem}
Let $\theta' \in \mathcal{S}$ where $\mathcal{S}$ is the set of fixed points of the Algorithm \ref{Alg:ECMspike}.    Then
\begin{equation*}
  \nabla_{\theta_{i}} L(\theta)_{|\theta=\theta'} =0 \textnormal{ for } i \in {1,2,3,5,6}.
\end{equation*}
\end{theorem}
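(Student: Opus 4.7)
The plan is to mimic the structure of the proof of Theorem~\ref{Thm:ECMconvLaplace}, adapting it to the spike--slab parameterization $\Phi$ while being careful that the M-step for $\Phi_4 = (Z,\tilde J)$ is only an approximate (coordinate descent) maximizer and that the spike--slab prior is non-differentiable in $\Phi_4$. This is precisely why the statement restricts attention to $i \in \{1,2,3,5,6\}$, i.e.\ to the smooth parameters $D$, $\Gamma$, $\sigma_{o,i}^2$, $\zeta$, and $\sigma_{j,i}^2(t)$.

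First I would set up the standard EM decomposition
\begin{equation*}
  L(\theta) \;=\; \mathcal{G}(\theta,\theta') \;-\; H(\theta,\theta'),
  \qquad H(\theta,\theta') \;=\; \mathbb{E}_{p(x\mid y,\theta')}\!\log p(X\mid y,\theta),
\end{equation*}
exactly as in the Laplace case. The information inequality gives $H(\theta',\theta')\ge H(\theta,\theta')$ for all feasible $\theta$, and because the latent log-density $\log p(X\mid y,\theta)$ is smooth in each of $\Phi_1,\Phi_2,\Phi_3,\Phi_5,\Phi_6$ (the spike--slab atom only affects the joint density through $\Phi_4$, on which we are not differentiating), dominated convergence lets me pass the gradient through the expectation and conclude $\nabla_{\theta_i} H(\theta,\theta')\bigl|_{\theta=\theta'}=0$ for $i\in\{1,2,3,5,6\}$.

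Next I would use the fixed-point property $\theta'\in\mathcal{S}$. For $i\in\{1,2,3\}$ the conditional M-step is identical to Algorithm~\ref{Alg:ECM}, so the closed-form updates \eqref{driftEM}, \eqref{GammaEM}, \eqref{obsEM} are already interior unconstrained maximizers of smooth functions and hence $\nabla_{\theta_i}\mathcal{G}(\theta,\theta')\bigl|_{\theta=\theta'}=0$. For $i=5$ (the parameter $\zeta$) and $i=6$ (each $\sigma_{j,i}^2(t)$), the updates \eqref{eq:UpdateZeta} and \eqref{eq:UpdateSigJ} come from conjugate Beta and inverse-gamma priors: the conditional objective is strictly concave in $\zeta\in(0,1)$ and in $\sigma_{j,i}^2(t)\in(0,\infty)$ and the maximizer lies in the interior of the parameter space (the hyperparameters $\alpha_\zeta,\beta_\zeta,\alpha_j,\beta_j>0$ prevent boundary solutions). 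Hence at $\theta'$ the gradient of $\mathcal{G}(\cdot,\theta')$ vanishes in these coordinates as well.

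Combining the two ingredients through $L=\mathcal{G}-H$ gives $\nabla_{\theta_i} L(\theta)\bigl|_{\theta=\theta'}=0$ for $i\in\{1,2,3,5,6\}$. The anticipated obstacle is a technical one rather than a conceptual one: I must ensure that the interchange of gradient and expectation in $H$ is justified for $\theta_5$ and $\theta_6$, for which an integrability bound on the score function of the spike--slab prior and of the Beta prior on $\zeta$ is needed. Both can be verified using the boundedness established in the analogue of Lemma~\ref{Lem:Gamma} for Algorithm~\ref{Alg:ECMspike} (the sequence $\Phi^{(k)}$ stays in a compact subset of the feasible region, so the gradients are uniformly bounded on the trajectory and on any limit point). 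Note that no claim is made about $\theta_4$: because the $(Z,\tilde J)$ update is only a coordinate-descent step on a non-convex, non-smooth objective, we cannot upgrade to a subdifferential condition $0\in\partial_{\theta_4}L$ as in the Laplace case, which is why the theorem legitimately omits $i=4$.
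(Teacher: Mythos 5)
Your proposal is correct and follows essentially the same route as the paper, which simply states that the proof is identical to that of Theorem~\ref{Thm:ECMconvLaplace}: the decomposition $L=\mathcal{G}-H$, the information inequality forcing $\nabla_{\theta}H(\theta,\theta')|_{\theta=\theta'}=0$, and the fixed-point property forcing $\nabla_{\theta_i}\mathcal{G}(\theta,\theta')|_{\theta=\theta'}=0$ in the smooth coordinates. Your added remarks on dominated convergence and on why $i=4$ is excluded are consistent with, and slightly more careful than, the paper's treatment.
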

The proof of this result is the same as Theorem \ref{Thm:ECMconvLaplace}. 
\section{MCMC Details}\label{SecMCMCAppendix}
In this section we state the conditional distributions needed to implement the Gibbs sampling approach in Section \ref{SecGibbs}.
\subsection{Conditional Price Distribution}
Let $\mathcal{N}(x,\mu,R)$ be the normal PDF in $x$ with mean $\mu$ and covariance $R$.  For the Gibbs sampling approach we need to determine the conditional distribution of $X(t)$ given $\Phi_{-1},\Gamma,Y$, and $X(s) s \ne t$.   Let $Y_{tot}(t)$ be the total price vector obtained from observed prices $Y(t)$ and the current sample of the unobserved prices $Y_{miss}(t)$.   We first note that for $t >1, t<T$
\begin{eqnarray*}
  p(x(t)|x(s),\phi_{-1},\gamma,y_{tot};\forall s\ne t)&=&p(x(t)|x(t-1),x(t+1),\phi_{-1,-0},\gamma,y_{tot}(t))\\
  &\propto& p(x(t+1)|x(t),\phi_{-1,-0},\gamma)\\
  &&p(y_{tot}(t)|x(t),\phi_{-1,-0},\gamma)\\
  &&p(x(t)|x(t-1),\phi_{-1,-0},\gamma).
\end{eqnarray*}
By properties of normal distributions
\begin{equation}
  p(x(t+1)|x(t),\phi_{-1,-0},\gamma)=\mathcal{N}(x(t+1),x(t)+j(t+1)+d,\Gamma)
\end{equation}
and
\begin{equation}
  p(y_{tot}(t)|x(t),\phi_{-1},\gamma)=\mathcal{N}(y_{tot}(t),x(t),\sigma_{o}^{2}I).
\end{equation}

With this recall the following multiplication property of normal PDFs
\begin{equation*}
  \mathcal{N}(x,\mu_{1},R_{1})\mathcal{N}(x,\mu_{2},R_{2}) \propto \mathcal{N}(x,\mu_{3},R_{3})
\end{equation*}
where
\begin{equation*}
  R_{3}=(R_{1}^{-1}+R_{2}^{-1})^{-1}
\end{equation*}
and
\begin{equation*}
  \mu_{3}=R_{3}R_{1}^{-1}\mu_{1} + R_{3}R_{2}^{-1}\mu_{2}.
\end{equation*}
%Let $V$ be a diagonal matrix where $V_{i,i}=\sigma_{o}^{-2}$ if asset $i$ is observed at time $t$ and 0 otherwise.   Similarly let $v$ be a vector where $v_{i}$ equals 0 if the noisy $i^{th}$ asset price is not observed and equals the observed $i^{th}$ asset price otherwise.
Using the multiplication property above
\begin{equation*}
  p(x(t)|x(t-1),y_{tot}(t))=\mathcal{N}(x(t),q,Q)
\end{equation*}
where
\begin{equation*}
  Q=(\Gamma^{-1}+\sigma_{j}^{-2}I)^{-1}
\end{equation*}
\begin{equation*}
  q=Q\Gamma^{-1}(x(t-1)+J(t)+D) + \sigma_{j}^{-2}Qy_{tot}(t).
\end{equation*}
Applying the multiplication property again gives
\begin{equation*}
  p(x(t)|x(t-1),y_{tot}(t),x(t+1))=\mathcal{N}(x(t),q',Q')
\end{equation*}
where
\begin{equation*}
  Q'=(\Gamma^{-1}+Q^{-1})^{-1}
\end{equation*}
and
\begin{equation*}
  q'=Q'Q^{-1}q + Q'\Gamma^{-1}(x(t+1)-D-J(t+1).
\end{equation*}
The conditional distributions for $t=1$ and $t=T$ can be derived similarly.

Another  approach to sampling from the conditional distribution $p(X|Y,\Phi)$ is the Forward Filtering Backward Simulation (FFBS) approach \cite{FFBS}.   The FFBS algorithm allows one to sample directly from the conditional joint distribution of $X(1:T)$, but the required backward simulation can be computationally intensive as one must compute $T$ Cholesky decompositions.   In the approach outlined above one only needs to compute 3 Cholesky decompositions ($t=1$,$t=T$ and once for $1<t<T$).

\subsection{Conditional Jump Distribution}\label{SecMCMCAppendixJump}
Let $\mathcal{N}(x,\mu,\tau)$ be the normal PDF in $x$ with mean $\mu$ and variance $\tau$.
Recall that the prior distribution of the jumps is the spike and slab prior
\begin{equation*}
  p(j)= f(j)= \zeta\delta_{0}(j) + (1-\zeta)\mathcal{N}(j,0,\sigma_{j}^{2})
\end{equation*}
and that in the prior distribution the jumps are independent and identically distributed.   When conditioned on $\sigma_{j}^{2},\zeta,X, D$ and $\Gamma$, $J_{m}(t)$ and $J_{n}(s)$ remain independent for $s \ne t$ but  $J_{m}(t)$ and $J_{n}(t)$ become dependent.   The conditional distribution $p(j(t)|\phi_{-2},\gamma)$ can be written as
\begin{equation}\label{appenJumpProb}
  p(j(t)|\phi_{-2},\gamma)= \frac{c \exp\left(\frac{-(j(t)-v(t))^{T}\Gamma^{-1}(j(t)-v(t))}{2}\right)}{\sqrt{(2\pi)^{N}|\Gamma|}} \prod_{i=1}^{N}f(j_{i}(t))
\end{equation}
where $v(t)=X(t)-X(t-1)-D$ and where $c>0$ is independent of $j(t)$.
Sampling directly from this distribution is difficult due to the combinatorial nature of the prior.   Therefore we sample sequentially each component of $j_{i}(t)$ conditioned on $j_{-i}(t)$.

To derive the posterior distribution of $j_{i}(t)$ we note that from properties of the multivariate normal distribution
\begin{equation*}
  p(j_{i}(t)|j_{-i}(t),\gamma,\phi_{-2}) \propto \mathcal{N}(j_{i}(t),a(i),b^{2}(i))f(j_{i}(t))
\end{equation*}
where
\begin{equation*}
  a(i)=v_{i}(t)+\Gamma_{i,-i}\Gamma_{-i,-i}^{-1}(j_{-i}(t)-v_{-i}(t))
\end{equation*}
and
\begin{equation*}
  b^{2}(i)=\Gamma_{i,i}-\Gamma_{i,-i}\Gamma_{-i,-i}^{-1}\Gamma_{-i,i}.
\end{equation*}
Next we determine $Pr(Z_{i}(t)=z|\phi_{-2},\gamma,j_{-i}(t))$ for $z=0,1$.  Recall the following identity for normal PDFs
\begin{equation*}
  \mathcal{N}(x,u_{1},\tau_{1}^{2})\mathcal{N}(x,u_{2},\tau_{1}^{2})= \mathcal{N}(u_{1},u_{2},\tau_{1}^{2}+\tau_{2}^{2})\mathcal{N}(x,u,\tau^{2})
\end{equation*}
\begin{equation*}
  u=\frac{\tau_{1}^{-2}u_{1}+\tau_{2}^{-2}u_{2}}{\tau_{1}^{-2}+\tau_{2}^{-2}}
\end{equation*}
\begin{equation*}
  \tau^{2}=\frac{\tau_{1}^{2}\tau_{2}^{2}}{\tau_{1}^{2}+\tau_{2}^{2}}.
\end{equation*}
Using the relationship above we have
\begin{equation}\label{eq:ZMCMCpost}
  Pr(Z_{i}(t)=z|\phi_{-2},\gamma,j_{-i}(t)) \propto \begin{cases} \zeta \mathcal{N}(0,a(i),b^{2}(i))\mbox{ if } z = 0\\
  (1-\zeta)\mathcal{N}(0,a(i),b(i)+\sigma_{j}^{2}) \mbox{ if } z = 1
 \end{cases}
\end{equation}
We now draw $Z_{i}(t)$ from this distribution.   If $Z_{i}(t)=0$, $J_{i}(t)$ is set to zero, otherwise we draw $J_{i}(t)$ from the distribution
\begin{equation} \label{eq:jMCMCpost}
 p(j_{i}(t)|z_{i}(t)=1,\phi_{-2},\gamma,j_{-i}(t))
\end{equation}
which from the above relationship is a normal distribution with mean
\begin{equation*}
  \frac{a(i)}{1+b^{2}(i)\sigma_{j}^{-2}}
\end{equation*}
and variance
\begin{equation*}
  \frac{b^{2}(i)\sigma_{j}^{2}}{b^{2}(i)+\sigma_{j}^{2}}.
\end{equation*}

\subsubsection{Conditional Posterior Mode of $j_{i}$ in KECM spike and slab model}
Note that the conditional maximization steps for $J$ used in KECM algorithm for spike and slab models can be derived in a similar manner as above.  To see this note that \eqref{eqJumpProbSpike} is up to a constant the logarithm of \eqref{appenJumpProb} where $v$ is replaced with $\Delta$.   Thus one can compute the modes of $Z_{i}$ and $J_{i}$ from the conditional distributions defined above in \eqref{eq:ZMCMCpost} and \eqref{eq:jMCMCpost}.
\subsection{Other conditional distributions}
The remaining conditional distributions for the other parameters are easily obtained due to conjugate prior relationships \cite{FinkConjPrior}.

%\begin{AMS}91B28,65K10,62F35,62F40\end{AMS}

%\pagestyle{myheadings}
%\thispagestyle{plain}

\bibliography{C:/Users/Micha/Documents/jXin/papers/trunk/siamltexmm/proposal}

\begin{thebibliography}{10}

\bibitem{AravkinL1Laplace}
{\sc A.~Aravkin, B.~Bell, J.~Burke, and G.~Pilonetto}, {\em An $\ell_{1}$
  {L}aplace robust {K}alman smoother}, IEEE Transactions on Automatic Control,
  56 (2011).

\bibitem{SahaliaHighFreqCovariance}
{\sc Y.~A\text{\"i}t-Sahalia, J.~Fan, and D.~Xiu}, {\em {High-Frequency
  Covariance Estimates With Noisy and Asynchronous Financial Data}}, Journal of
  the American Statistical Association, 105 (2010), pp.~1504--1517.

\bibitem{SahaliaHowOften}
{\sc Y.~A\text{\"i}t-Sahalia, P.~Myklank, and L.~Zhang}, {\em {How Often to
  Sample a Continuous-Time Process in the Presence of Market Microstructure
  Noise}}, Review of Financial Studies, 100 (2005), pp.~1394--1411.

\bibitem{BandiRussell}
{\sc F.~Bandi and J.~Russell}, {\em Separating microstructure noise from
  volatility}, Journal of Financial Economics, 79 (2006), pp.~655--692.

\bibitem{RefreshTime}
{\sc O.~Barndorff-Nielsen, P.~Hansen, A.~Lunde, and N.~Shephard}, {\em
  Multivariate realised kernels: consistent positive semi-definite estimators
  of the covariation of equity prices with noise and non-synchronous trading},
  Journal of Econometrics, 162 (2011).

\bibitem{Barry1974}
{\sc C.~Barry}, {\em {Portfolio Analysis under Uncertain Means,Variances and
  Covariances}}, Journal of Finance, 29 (1974), pp.~515--522.

\bibitem{FISTA}
{\sc A.~Beck and M.~Teboulle}, {\em {A Fast Iterative Shrinkage-Thresholding
  Algorithm for Linear Inverse Problems}}, SIAM Imaging Scienses, 2 (2009),
  pp.~183--202.

\bibitem{Bollerslev1990}
{\sc T.~Bollerslev}, {\em Modeling the coherence in short-run nominal exchange
  rates: A multivariate generalized {ARCH} approach}, Review of Economics and
  Statistics, 72 (1990), pp.~498--505.

\bibitem{BoudtJump2}
{\sc K.~Boudt, C.~Croux, and S.~Laurent}, {\em Outlyingness weighted
  covariation}, Journal of Financial Econometrics, 9 (2011).

\bibitem{BoudtJump}
{\sc K.~Boudt and J.~Zhang}, {\em Jump robust two time scale covariance
  estimation and realized volatility budgets}, Quantitative Finance, 15 (2015).

\bibitem{ADMM}
{\sc S.~Boyd, N.~Parikh, E.~Chu, B.~Peleato, and J.~Eckstein}, {\em
  {Distributed Optimization and Statistical Learning via the Alternating
  Direction Method of Multipliers}}, Foundations and Trends in Machine
  Learning, 3 (2011), pp.~1--122.

\bibitem{LoAndrewBook}
{\sc J.~Campbell, A.~Lo, and A.~MacKinlay}, {\em The Econometrics of Financial
  Markets}, Princeton University Press, 1996.

\bibitem{CandesReweightedL1}
{\sc E.~Cand\`{e}s, M.~Wakin, and S.~Boyd}, {\em Enhancing sparsity by
  reweighted $\ell_{1}$ minimzation}, Journal of Fourier Anal. Appl., 14
  (2008), pp.~877--905.

\bibitem{ExplainingGibbs}
{\sc G.~Casella and E.~George}, {\em Explaining the gibbs sampler}, The
  American Statistician, 46 (1992), pp.~167--174.

\bibitem{ChanGARCHJump}
{\sc W.~Chan and J.~Maheu}, {\em Conditional jump dynamics in stock market
  returns}, Journal of Business and Economic Statistics, 20 (2002),
  pp.~377--389.

\bibitem{KalmanEMStocks}
{\sc F.~Corsi, S.~Peluso, and F.~Audrino}, {\em {Missing in Asynchronicity: A
  Kalman-EM Approach for Multivariate Realized Covariance Estimation}}, Journal
  of Applied Econometrics, 30 (2015).

\bibitem{DeMiguelOneOverN}
{\sc V.~DeMiguel, L.~Garlappi, and R.~Uppal}, {\em {Optimal Versus Naive
  Diversification: How Inefficient is the 1{/}N Portfolio Strategy?}}, The
  Review of Financial Studies, 22 (2009), pp.~1915--1953.

\bibitem{EMRubin}
{\sc A.~Dempster, N.~Laird, and D.~Rubin}, {\em {Maximum likelihood from
  Incomplete Data via the EM Algorithm}}, Journal of the Royal Statistical
  Society. Series B, 39 (1977), pp.~1--38.

\bibitem{VastVolMatrix}
{\sc J.~Fan, Y.~Li, and K.~Yu}, {\em {Vast Volatility Matrix Estimation Using
  High Frequency Data for Portfolio Selection}}, Journal of the American
  Statistical Association, 107 (2012), pp.~412--428.

\bibitem{FanJumps}
{\sc J.~Fan and Y.~Wang}, {\em Multi-scale jump and volatility analysis for
  high-frequency financial data}, Journal of the American Statistical
  Association, 102 (2007), pp.~1349--1362.

\bibitem{FinkConjPrior}
{\sc D.~Fink}, {\em A compendium of conjugate priors}, tech. report, Montana
  State Univeristy, 1997.

\bibitem{FFBS}
{\sc S.~Fr$\ddot{u}$hwirth-Schnatter}, {\em Data augmentation and dynamic
  linear models}, Journal of Time Series Analysis, 15 (1994), pp.~183--202.

\bibitem{Ghahramani}
{\sc Z.~Ghahramani and G.~Hinton}, {\em Variational learning for switching
  state-space models}, Neural Computation, 12 (2000), pp.~963--996.

\bibitem{GoldStein1}
{\sc T.~Goldstein and S.~Osher}, {\em {T}he {S}plit {B}regman {M}ethod for
  $\ell_{1}$ {R}egularized problems}, SIAM Journal on Imaging Sciences, 2
  (2009), pp.~323--343.

\bibitem{Jobson1980}
{\sc J.~Jobson and B.~Korkie}, {\em {Estimation for Markowitz Efficient
  Portfolios}}, Journal of the American Statistical Association, 75 (1980),
  pp.~544--554.

\bibitem{Karpoff}
{\sc J.~Karpoff}, {\em The relation between price changes and trading volume: A
  survey}, The Journal of Financial and Quantitative Analysis, 22 (1987),
  pp.~109--126.

\bibitem{LiuQuasiMLE}
{\sc C.~Liu and C.~Tang}, {\em {A quasi-maximum likelihood approach for
  integrated covariance matrix estimation with high frequency data}}, Journal
  of Econometrics, 180 (2014).

\bibitem{CovarLiu}
{\sc J.~Liu, W.~Wong, and A.~Kong}, {\em {Covariance structure of the Gibbs
  sampler with applications to the comparision of estimators and augmentation
  schemes}}, Biometrika, 81 (1994), pp.~27--40.

\bibitem{LoNonSync}
{\sc A.~Lo and A.~MacKinlay}, {\em An econometric analysis of nonsynchronous
  trading}, Journal of Econometrics, 45 (1990), pp.~181--211.

\bibitem{LuenbergerBook}
{\sc D.~Lunenberger and Y.~Ye}, {\em Linear and Nonlinear Programming}, New
  York,NY, 2008.

\bibitem{MaheuGARCHJump}
{\sc J.~Maheu and T.~McCurdy}, {\em News arrival, jump dynamics, and volatility
  components for individual stock returns}, Journal of Finance, 59 (2004),
  pp.~755--793.

\bibitem{BoydRobustKalman}
{\sc J.~Mattingley and S.~Boyd}, {\em {Real-Time Convex Optimization in Signal
  Processing}}, IEEE Signal Processing Magazine,  (2010).

\bibitem{ECMRubin}
{\sc X.~Meng and D.~Rubin}, {\em {Maximum likelihood estimation via the ECM
  algorithm:A general framework}}, Biometrika, 80 (1993), pp.~267--278.

\bibitem{SpikeSlab}
{\sc T.~Mitchell and J.~Beauchamp}, {\em {Bayesian Variable Selection in Linear
  Regression}}, Journal of the American Statistical Association, 83 (1988),
  pp.~1023--1032.

\bibitem{BayesL1}
{\sc S.~Mohamed, K.~Heller, and Z.~Ghahramani}, {\em {Bayesian and $\ell_{1}$
  Approaches for Sparse Unsupervised Learning}}, Proceedings of the 29$^{th}$
  International Conference on Machince Learning, Edinburgh, Scotland, 2012.

\bibitem{BayesianEM}
{\sc S.~Peluso, F.~Corsi, and A.~Mira}, {\em {A Bayesian High-Frequency
  Estimator of the Multivariate Covariance of Noisy and Asynchronous Returns}},
  Journal of Financial Econometrics, To Appear.

\bibitem{MonteCarloCasella}
{\sc C.~Robert and G.~Casella}, {\em Monte Carlo Statistical Methods},
  Springer, 2004.

\bibitem{Roll84}
{\sc R.~Roll}, {\em {A Simple Implicit Measure of the Effective Bid-Ask Spread
  in an Efficient Market}}, The Journal of Finance, 39 (1984), pp.~1127--1139.

\bibitem{Seeger1}
{\sc M.~Seeger}, {\em {Bayesian Inference and Optimal Design for the Sparse
  Linear Model}}, Journal of Machine Learning Research, 9 (2008), pp.~759--813.

\bibitem{Shumway}
{\sc R.~Shumway and D.~Stoffer}, {\em {An approach to time series smoothing and
  forecasting using the EM algorithm}}, Journal of Time Series Analysis, 3
  (1982), pp.~253--264.

\bibitem{ShumwayBook}
{\sc R.~Shumway and D.~Stoffer}, {\em {Time Series Analysis and its
  Applications with R Examples}}, Springer, 2011.

\bibitem{WuEM}
{\sc C.~Wu}, {\em {On the Convergence Properties of the EM Algorithm}}, The
  Annals of Statistics, 11 (1983), pp.~95--103.

\bibitem{ZangwillBook}
{\sc W.~Zangwill}, {\em {Nonlinear Programming: A Unified Approach}},
  Prentice-Hall, 1969.

\bibitem{ZhangEpps}
{\sc L.~Zhang}, {\em Estimating covariation: Epps effect and microstructure
  noises}, Journal of Economics, 160 (2011), pp.~33--47.

\bibitem{BidAskZhang}
{\sc M.~Zhang, J.~Russel, and R.~Tsay}, {\em Determinants of bid and ask quotes
  and implications for the cost of trading}, Journal of Empirical Finance, 15
  (2008), pp.~656--678.

\end{thebibliography}
\bibliographystyle{siam}

\end{document}